\documentclass[pdflatex,sn-mathphys-num]{sn-jnl}

\usepackage{graphicx}%
\usepackage{multirow}%
\usepackage{amsmath,amssymb,amsfonts}%
\usepackage{amsthm}%
\usepackage{mathrsfs}%
\usepackage[title]{appendix}%
\usepackage{xcolor}%
\usepackage{textcomp}%
\usepackage{manyfoot}%
\usepackage{booktabs}%
\usepackage{algorithm}%
\usepackage{algorithmicx}%
\usepackage{algpseudocode}%
\usepackage{listings}%
\usepackage{upgreek} 
\usepackage{enumerate}
\usepackage{rotating}
\usepackage{array}
\usepackage{graphicx}   
\usepackage{caption}    
\usepackage{rotating} 
\usepackage{diagbox}  

\theoremstyle{thmstyleone}%
\newtheorem{theorem}{Theorem}
\newtheorem{lemma}[theorem]{Lemma}

\theoremstyle{thmstyletwo}%
\newtheorem{example}{Example}%
\newtheorem{remark}{Remark}%

\theoremstyle{thmstylethree}%

\begin{document}
	
	\title[Article Title]{Five Constructions of Asymptotically Optimal Aperiodic Doppler Resilient Complementary Sequence Sets with New Parameters}
	
	
	\author[1]{\fnm{Xuanyu} \sur{Liu}}\email{18372867620@163.com}
	
	\author*[2]{\fnm{Pinhui} \sur{Ke}}\email{keph@fjnu.edu.cn}
	
	\author[3]{\fnm{Zuling} \sur{Chang}}\email{zuling\_chang@zzu.edu.cn}
	
	\affil[1]{School of Mathematics and Statistics, Fujian Normal University, Fuzhou, Fujian 350117, China}
	
	\affil[2]{Key Laboratory of Analytical Mathematics and Applications (Ministry of Education), Fujian Normal University, Fuzhou, Fujian 350117, China}
	
	\affil[3]{School of Mathematics and Statistics, Zhengzhou University, Zhengzhou, Henan, 450001, China}


	\abstract{Sequences exhibiting favorable ambiguity function characteristics play a critical role in radar detection systems and modern mobile communication applications. As a newly developed sequence family, Doppler resilient complementary sequence sets (DRCSSs) can effectively suppress ambiguity function sidelobes by coherently combining the ambiguity functions of their constituent subsequences. The objective of this paper is to present five classes of asymptotically optimal aperiodic DRCSSs with novel parameters based on trace functions over finite fields and column orthogonal complex matrices. Compared with existing asymptotically optimal aperiodic DRCSSs in the literature, the proposed aperiodic DRCSSs deliver superior or novel parameters. Notably, for three families of the constructed aperiodic DRCSSs, the column sequence peak-to-average power ratio (PAPR) is upper bounded by $p$ by selecting suitable column orthogonal complex matrices.}

	\keywords{Doppler resilient complementary sequence sets, low aperiodic ambiguity function magnitudes, complementary sequences.}
	
	
	
	\maketitle
	
	\section{Introduction}
 Sequences with favorable correlation properties have been widely applied in engineering practices, especially in wireless communications and radar sensing, where typical applications include synchronization, channel estimation, interference cancellation, target detection, and positioning \cite{C1,C2}. A perfect sequence set requires that the auto-correlation function of each sequence behaves as an impulse function, while the cross-correlation between any two distinct sequences is identically zero. However, such an ideal sequence set cannot exist due to the constraint imposed by the Welch bound \cite{C3}.
 
 To attain ideal correlation properties, complementary sequences have been proposed and extensively investigated. A two-dimensional array can be viewed as a complementary sequence. It is referred to as a perfect complementary sequence (PCS) if the sum of the auto-correlations of its row sequences is zero for all non-zero time shifts. Golay \cite{C4} first introduced the concept of aperiodic complementary sequence pairs in 1961. Later, in 1972, Tseng and Liu \cite{C5} generalized this concept to sets consisting of more than two sequences. Bomer et al. \cite{C6} and Luke \cite{C7} further developed periodic and odd-periodic PCSs, respectively.
 A perfect complementary sequence set (PCSS) consists of multiple perfect complementary sequences, such that the sum of their cross-correlations is zero \cite{C8}. PCSSs have become crucial in numerous applications, including multi-carrier code-division multiple-access (MC-CDMA) systems \cite{C9}, inter-symbol interference channel estimation \cite{C10}, radar waveform design \cite{C11}, and peak-to-average power ratio (PAPR) reduction \cite{C12}. In a PCSS, the set size $K$ is constrained by the flock size $M$, which denotes the number of subsequences in each complementary sequence, and $K \leq M$. In particular, if $K=M$, a PCSS is called a complete complementary code (CCC) \cite{C8}.
 Nevertheless, the limited set size of PCSSs presents a critical drawback in CDMA systems, where the number of users is strictly restricted by the number of subcarrier channels. To address this limitation, zero-correlation zone complementary sequence sets (ZCZ-CSSs) \cite{C13,C14}, low-correlation zone complementary sequence sets (LCZ-CSSs) \cite{C15}, and quasi-complementary sequence sets (QCSSs) \cite{C16,C17} have been proposed in the literature.	
 
 In high-mobility scenarios, modern communication and radar systems often confront the challenging Doppler effect. Traditional sequences, including complementary sequences, usually exhibit inadequate performance in such environments. Different from conventional sequence design criteria, this paper introduces the ambiguity function (AF) to characterize the receiver performance when both time delay and Doppler shift exist simultaneously. Ding et al. \cite{C18} first derived a theoretical bound on the magnitude of the ambiguity function (AF) for a sequence set. Such sequence sets are termed Doppler resilient sequence sets (DRSSs) \cite{C19}. Subsequently, in 2022, Ye et al. \cite{C19} improved the lower bound by exploiting the property that the magnitude of the auto-AF of a unimodular sequence at zero delay is zero for any non-zero Doppler shift. In addition, several DRSSs meeting the proposed periodic AF lower bound have been developed. More optimal DRSSs with low AF magnitudes have been reported in \cite{C19,C20,C21,C22,C23,C24}. However, the design of optimal DRSSs remains a challenging task.
 
 In order to reduce the magnitude of the AF, based on the idea of complementary sequences and DRSSs, Shen et al. \cite{C25} proposed the concept of a Doppler resilient complementary sequence set (DRCSS) and derived lower bounds for periodic, aperiodic, and odd-periodic AF of DRCSSs. Moreover, Shen et al. \cite{C25} introduced some constructions of DRCSSs based on algebraic and combinatorial methods such as orthogonal matrices, circular Florentine rectangles and difference sets, which can generate the optimal DRCSSs. Later, Shen et al. \cite{C26} derived several classes of asymptotically optimal and near-optimal DRCSSs by using one-coincidence frequency-hopping sequence sets, almost difference sets and repeating the CCCs. By selecting an appropriate weight vector, Wang et al. \cite{D1} derived a tighter lower bound than \cite{C25} for aperiodic DRCSSs. Moreover, Wang et al. \cite{D1} introduced a class of aperiodic DRCSS using the quasi-Florentine rectangles and Butson-type Hadamard matrices. Subsequently, Wang et al. \cite{D2} proposed several sets of aperiodic DRCSSs based on generalized quasi-Florentine rectangles and Butson-type Hadamard matrices. The proposed aperiodic DRCSSs are shown to be asymptotically optimal with respect to the lower bound in \cite{D1}.
 
 In \cite{a1}, Li et al. constructed three families of asymptotically optimal aperiodic QCSSs using trace functions over finite fields. Subsequently, Xiao et al. \cite{a2} constructed three classes of asymptotically optimal aperiodic QCSSs using additive characters over finite fields. Recently, based on column orthogonal complex matrices, additive characters of finite fields and Butson-type Hadamard matrices, Wang et al. \cite{a3} constructed four families of asymptotically optimal aperiodic QCSSs. Motivated by the aforementioned works,
in this paper, based on trace functions over finite fields and column orthogonal complex matrices, we present five classes of aperiodic DRCSSs with novel parameters, all of which are asymptotically optimal with respect to the bound given in \eqref{eqD1}. Specifically, in Theorem \ref{theorem1}, we present a family of asymptotically optimal aperiodic DRCSSs with an alphabet size $p$ based on column orthogonal complex matrices. Compared with the existing aperiodic DRCSS families in \cite{C25,D1}, the proposed design achieves a larger set size under the same flock size, constituent sequence length, and maximum ambiguity magnitude. Meanwhile, in Theorem \ref{theorem2}, we construct another family of asymptotically optimal aperiodic DRCSSs with alphabet size $p$ via column orthogonal complex matrices, which also yields a larger set size than the design in \cite{D1,D2} under the same parameter settings. Furthermore, in Theorems \ref{theorem1}, \ref{theorem2}, and \ref{theorem3}, we prove that by choosing appropriate column orthogonal matrices, the column-sequence PAPR of each complementary matrix in the proposed DRCSSs is upper bounded by $p$. 
The parameters of the known families of aperiodic DRCSSs and those proposed in this paper are summarized in Table \ref{table0}. 

The rest of this paper is organized as follows. Section \ref{s2} introduces the necessary notations and lemmas. In Section \ref{s3}, we present five classes of DRCSSs and calculate their parameters. Section \ref{s4} concludes this paper.  

\begin{sidewaystable}[htbp]
	\centering
	\renewcommand{\arraystretch}{1.2} 
	\caption{The parameters of known aperiodic DRCSSs}
	\label{table0}
	\begin{tabular}{c|c|c|c|c|c|c|>{\raggedright\arraybackslash}m{4cm}|c}
		\hline  
		Set size   & Flock size  & Length  & $\hat{\theta}_{\text{max}}$ &Alphabet size& $Z_x$ & $Z_y$ & Constraints & References\\
		\hline
		$\tilde{F}(N)$ & $N$ & $N$ & $N$ & $N$ & $N$ & $N$ & $N\geq 2$ is an integer, $\tilde{F}(N)$ is the maximum number of rows for which an $\tilde{F}(N) \times N$ circular Florentine rectangles exist.   &  \cite{C25}\\
		\hline
		$F_{Q,1}(N)$ & $N$ & $N-1$ & $N$ & $r$ & $N-1$ & $N-1$ & $2\leq r \leq N$ is an integer, $N\geq 2$ is an integer, $F_{Q,1}(N)$ is the maximum number of rows for which an $F_{Q,1}(N) \times (N-1)$ quasi-Florentine rectangles exist. & \cite{D1} \\
		\hline
		$F(N)$ & $N$ & $N$ & $N$ & $r$ & $N$ & $N$ & $2\leq r \leq N$ is an integer, $N\geq 2$ is an integer, $F(N)$ is the maximum number of rows for which an $F(N) \times N$ Florentine rectangles exist. &  \cite{D1} \\
		\hline
		$F_{Q,N-L}(N)$ & $N$ & $L$ & $N$ & $r$ & $L$ & $L$ & $2\leq r \leq N$ is an integer, $N\geq 2$ is an integer, $F_{Q,N-L}(N)$ is the maximum number of rows for which an $F_{Q,N-L}(N) \times L$ quasi-Florentine rectangles exist, $2\leq L \leq N$. &  \cite{D2}\\
		\hline
		$q+1$ & $q$ & $q$ & $q$ & $p$ & $q$ & $q$ & $q=p^n>2$ &this paper\\
		\hline
	    $q+1$ & $q$ & $q-1$ & $q$ & $p$ & $q-1$ & $q-1$ & $q=p^n>3$ &this paper\\
		\hline
	    $q-1$ & $q$ & $q+1$ & $q$ & $p$ & $q+1$ & $q+1$ & $q=p^n>3$ &this paper\\
		\hline
		$q-1$ & $q-1$ & $q$ & $q-1$ & $q-1$ & $q$ & $q$ & $q=p^n>3$ &this paper\\
		\hline
        $q-1$ & $q-1$ & $q-1$ & $q-1$ & $q-1$ & $q-1$ & $q-1$ & $q=p^n>4$ &this paper\\
	\end{tabular}
\end{sidewaystable} 

	\section{Preliminaries}\label{s2}
	In this section, we present the basic notations and key concepts which will be employed throughout this work.
	
	\begin{itemize}
		\item $\mathbb{Z}_{N}$ denotes the residue class ring modulo $N$ and $\mathbb{Z}_N^*$ denotes the multiplicative group of units in $\mathbb{Z}_{N}$.  
		\item Let $q=p^n$, where $p$ is a prime and $n$ is a positive integer, $\mathbb{F}_q$ denotes the finite field with $q$ elements and $\mathbb{F}_q^*=\mathbb{F}_q\setminus\{0\}$.
		\item $\xi_N=e^{\frac{2\pi\sqrt{-1}}{N}}$ is a primitive $N$-th complex root of unity.
		\item $(\cdot)^*$ denotes the complex conjugate.
		\item $(\cdot)^{\mathrm{T}}$ denotes the transpose of a matrix.
	\end{itemize}

\subsection{Ambiguity Function}
Let $\mathbf{a}=(a(0),a(1),\cdots,a(N-1))$ and $\mathbf{b}=(b(0),b(1),\cdots,b(N-1))$ be two complex unimodular sequences with period $N$, i.e., $|a(i)|=1$ and $0\leq i<N$. The aperiodic cross-ambiguity function of $\mathbf{a}$ and $\mathbf{b}$ in time shift $\tau$ and Doppler shift $v$ is defined as follow:

$$\widehat{AF}_{\mathbf{a},\mathbf{b}}(\tau,v)= \begin{cases}
	\sum_{i=0}^{N-1-\tau}a(i) b(i+\tau)^*\xi_N^{i v}, & 0\leq \tau \leq N-1,\\
		\sum_{i=-\tau}^{N-1}a(i) b(i+\tau)^*\xi_N^{i v}, & 1-N \leq \tau < 0,\\
		0, & |\tau| \geq 0.
\end{cases}$$
If $\mathbf{a}=\mathbf{b}$, $\widehat{AF}_{\mathbf{a},\mathbf{b}}(\tau,v)$ is called the aperiodic auto-ambiguity function, which is denoted by $\widehat{AF}_{\mathbf{a}}(\tau,v)$. Especially, when $v=0$, the two-dimensional AF degenerates into a one-dimensional correlation function, with aperiodic auto- and cross-correlation function denoted as $\widehat{CF}_{\mathbf{a}}$ and $\widehat{CF}_{\mathbf{a},\mathbf{b}}$, respectively.

\subsection{Aperiodic Doppler Resilient Complementary Sequence Set}
A DRCSS $\mathcal{C}=\{\mathbf{C}^{0},\mathbf{C}^{1},\cdots,\mathbf{C}^{K-1}\}$ contains $K$ (i.e., set size) DRCSs, each of which consists of $M$ (i.e., flock size) subsequences of length $N$, i.e.,
$$
\mathbf{C}^{k}=
\begin{bmatrix}
	\mathbf{c}_0^{k} \\
	\mathbf{c}_1^{k} \\
	\vdots \\
	\mathbf{c}_{M-1}^{k}
\end{bmatrix}
=
\begin{bmatrix}
	c_{0,0}^{k} & c_{0,1}^{k} & \cdots & c_{0,N-1}^{k} \\
	c_{1,0}^{k} & c_{1,1}^{k} & \cdots & c_{1,N-1}^{k} \\
	\vdots & \vdots & \ddots & \vdots \\
	c_{M-1,0}^{k} & c_{M-1,1}^{k} & \cdots & c_{M-1,N-1}^{k}
\end{bmatrix},
$$
where $\mathbf{c}_m^{k}=(c_{m,0}^{k},c_{m,1}^{k},\cdots,c_{m,N-1}^{k})$, $0\leq k<K$, and $0 \leq m<M$.

For two DRCSs $\mathbf{C}^{k_1}$ and $\mathbf{C}^{k_2}$, their aperiodic cross-ambiguity function is defined as the sums of the aperiodic cross-ambiguity functions of all subsequences, i.e.,
\begin{equation*}
	\widehat{AF}_{\mathbf{C}^{k_1},\mathbf{C}^{k_2}}(\tau,v)=\sum_{m=0}^{M-1}\widehat{AF}_{\mathbf{c}_{m}^{k_1},\mathbf{c}_{m}^{k_2}}(\tau,v),
\end{equation*}
which is abbreviated to $\widehat{AF}_{\mathbf{C}^{k_1}}(\tau,v)$ when $k_1=k_2$.
For a DRCSS, given a low ambiguity zone (LAZ) $\mathrm{II}\subseteq(-N,N)\times(-N,N)$, the maximum aperiodic ambiguity magnitude of DRCSS $\mathcal{C}$ over this region $\mathrm{II}$ is defined as $\hat{\theta}_{\text{max}}=\text{max}\{\hat{\theta}_a,\hat{\theta}_c\}$, where
$$\hat{\theta}_a=\text{max} \{|\widehat{AF}_{\mathbf{C}}(\tau,v)|: \mathbf{C}\in \mathcal{C}, (\tau,v)\neq (0,0)\in \mathrm{II}\}$$ 
is maximal aperiodic auto-ambiguity magnitude and
$$\hat{\theta}_c=\text{max} \{|\widehat{AF}_{\mathbf{C},\mathbf{D}}(\tau,v)|: \mathbf{C},\mathbf{D}\in \mathcal{C}, (\tau,v)\in \mathrm{II}\}$$    
is maximal aperiodic cross-ambiguity magnitude.

A DRCSS with maximum ambiguity magnitude $\hat{\theta}_{\text{max}}$ over LAZ $\mathrm{II}$ is denoted by $(K,M,N,\hat{\theta}_{\text{max}},\mathrm{II})$-LAZ-DRCSS. For a LAZ-DRCSS, if $\mathrm{II}=(-N,N)\times(-N,N)$, it is recorded as $(K,M,N,\hat{\theta}_{\text{max}})$-DRCSS. Especially, when $M=1$, the DRCSS degenerates into a DRSS \cite{C19} and denoted by $(K,N,\hat{\theta}_{\text{max}},\mathrm{II})$-LAZ-DRSS.

The aperiodic AF magnitude lower bound of DRCSSs can be expressed as follows:

\begin{lemma}\cite{D1}
	For an aperiodic $(K,M,N,\hat{\theta}_{\text{max}},\mathrm{II})$-LAZ-DRCSS, where $\mathrm{II}=(-Z_x,Z_x)\times (-Z_y,Z_y)$ and $1\leq Z_x,Z_y \leq N$, we have 
	\begin{equation}
		\hat{\theta}_{\text{max}} \geq \hat{\theta}_{\text{opt}}= \frac{MN}{\sqrt{Z_y}}\sqrt{\frac{\frac{KZ_xZ_y}{M(N+Z_x-1)}-1}{KZ_x-1}}.
	\end{equation}
\end{lemma}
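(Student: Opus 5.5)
We follow the standard Welch-type argument that yields bounds of this kind. The idea is to pack all sequence information into a family of vectors, compare the squared Frobenius norm of their Gram matrix against the trace bound, and use a weight vector to handle the aperiodic truncation.

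\textbf{Step 1: the vectors.} For each $\mathbf{C}^k$, each delay $0\le t<Z_x$ and each Doppler shift $0\le v<Z_y$, form a vector of length $M(N+Z_x-1)$. It is obtained by concatenating, over $m$, the zero-padded and shifted rows: the entry at position $t+i$ of block $m$ equals $w_{t+i}\,c^k_{m,i}\,\xi_N^{iv}$, and all other entries are zero. Here $w=(w_0,\dots,w_{N+Z_x-2})$ is a nonnegative weight vector. With $w\equiv 1$, inner products of two such vectors for $(k_1,t_1,v_1)$ and $(k_2,t_2,v_2)$ reduce exactly to $\widehat{AF}_{\mathbf{C}^{k_1},\mathbf{C}^{k_2}}(t_2-t_1,v_1-v_2)$, up to a unimodular phase. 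The differences lie in $(-Z_x,Z_x)\times(-Z_y,Z_y)=\mathrm{II}$. This gives $KZ_xZ_y$ vectors in a space of dimension $D=M(N+Z_x-1)$.

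\textbf{Step 2: the Frobenius-versus-trace inequality.} Let $G$ be the Gram matrix of these vectors. Since $\operatorname{rank} G\le D$, we have
\[
\sum_{a,b}|G_{ab}|^2\ \ge\ \frac{(\operatorname{tr}G)^2}{D}.
\]
With $w\equiv1$, every diagonal entry equals $MN$, so $\operatorname{tr}G=KZ_xZ_yMN$.

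The off-diagonal entries split into two groups:
\begin{itemize}
\item pairs with the same $k$ and the same $t$ but different $v$, whose values are $\widehat{AF}_{\mathbf{C}^k}(0,v)$;
\item all remaining pairs, each bounded in absolute value by $\hat\theta_{\max}$.
\end{itemize}
The key refinement, taken from \cite{C19}, concerns the first group. For unimodular entries, $\widehat{AF}_{\mathbf{C}^k}(0,v)=M\sum_i\xi_N^{iv}=0$ for $v\neq0$. So these $KZ_xZ_y(Z_y-1)$ entries vanish. Hence
\[
\sum_{a,b}|G_{ab}|^2\le KZ_xZ_y(MN)^2+KZ_xZ_y(KZ_xZ_y-Z_y)\hat\theta_{\max}^2 .
\]

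\textbf{Step 3: solve for the bound.} Combine
\[
KZ_xZ_y(MN)^2+KZ_xZ_y^2(KZ_x-1)\hat\theta_{\max}^2\ \ge\ \frac{(KZ_xZ_yMN)^2}{M(N+Z_x-1)}
\]
and solve for $\hat\theta_{\max}^2$:
\[
\hat\theta_{\max}^2\ \ge\ \frac{(MN)^2}{Z_y}\cdot\frac{\frac{KZ_xZ_y}{M(N+Z_x-1)}-1}{KZ_x-1}.
\]
This is exactly $\hat\theta_{\text{opt}}^2$. If the numerator is nonpositive the bound is trivial, so there is nothing to prove in that case.

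\textbf{Main obstacle.} The delicate part is Step 1. One has to check that the zero-padded inner products really equal the aperiodic cross-ambiguity values in both delay signs. One also has to check that the Doppler phase factor $\xi_N^{t v}$, introduced by the shift, has modulus one and so does not affect the magnitudes.

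Note that the uniform weight $w\equiv1$ already gives precisely the stated expression. The weight vector of \cite{D1} would only be needed for sharper variants, so I would first carry out the argument with $w\equiv1$.
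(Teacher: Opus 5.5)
Your proposal is correct. The paper only quotes this lemma from \cite{D1} without proof, and your argument is the standard uniform-weight Welch argument that yields exactly this expression: the Gram matrix of the $KZ_xZ_y$ zero-padded, Doppler-modulated shifts in $\mathbb{C}^{M(N+Z_x-1)}$, the trace--Frobenius inequality, and the vanishing of $\widehat{AF}_{\mathbf{C}^k}(0,v)$ for $0<|v|<Z_y\le N$. Your Step~3 algebra checks out.

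Only cosmetic points remain. With the inner product linear in the first slot, the delay comes out as $t_1-t_2$ with phase $\xi_N^{-(t_1-t_2)v_2}$, which is immaterial since $\mathrm{II}$ is symmetric. The formula also implicitly requires $KZ_x>1$.
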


\begin{lemma}\cite{D2}
	For an aperiodic $(K,M,N,\hat{\theta}_{\text{max}},\mathrm{II})$-LAZ-DRCSS, where $\mathrm{II}=(-Z_x,Z_x)\times (-Z_y,Z_y)$ and $1\leq Z_x,Z_y \leq N$, we have
		\begin{equation}\label{eqD1}
		\hat{\theta}_{\text{max}} \geq \hat{\theta}_{\text{opt}}= \sqrt{MN(1-2\sqrt{\frac{M}{3KZ_y}})},
	\end{equation}
	where $K>\frac{3M}{Z_y}$ and $N\sqrt{\frac{3M}{KZ_y}} \leq Z_x \leq N$.
\end{lemma}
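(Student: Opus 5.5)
The plan is to reduce the ambiguity-function statement to a pure aperiodic correlation bound for a larger quasi-complementary sequence set, and then apply a Levenshtein-type weighted-sum argument.

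\textbf{Step 1: Doppler-modulated family.} Given the $(K,M,N,\hat{\theta}_{\text{max}},\mathrm{II})$-LAZ-DRCSS $\mathcal{C}$, form for each $k$ and each $v\in\{0,1,\dots,Z_y-1\}$ the matrix $\mathbf{C}^{k,v}$ with entries $c^{k}_{m,i}\xi_N^{iv}$. This gives $K'=KZ_y$ complementary matrices of flock size $M$ and length $N$. A direct computation gives
\begin{equation*}
\widehat{CF}_{\mathbf{C}^{k_1,v_1},\mathbf{C}^{k_2,v_2}}(\tau)=\xi_N^{-\tau v_2}\,\widehat{AF}_{\mathbf{C}^{k_1},\mathbf{C}^{k_2}}(\tau,v_1-v_2).
\end{equation*}
Because $v_1,v_2\in[0,Z_y)$, we have $v_1-v_2\in(-Z_y,Z_y)$. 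Hence every nontrivial aperiodic correlation value with $|\tau|<Z_x$ has modulus at most $\hat{\theta}_{\text{max}}$. The pair $(\tau,v_1-v_2)=(0,0)$ occurs only for identical members of the new family. This is exactly why the one-sided range of $v$ is chosen: a symmetric range would let $v_1-v_2$ leave the LAZ. The new family is therefore a low-correlation-zone quasi-complementary set with parameters $(K',M,N)$, zone $Z_x$, and maximal correlation $\le\hat{\theta}_{\text{max}}$.

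\textbf{Step 2: weighted Levenshtein inequality.} Stack the $K'$ matrices column-wise as vectors, together with all their shifts by $\tau$ with $0\le\tau<Z_x$, each shift weighted by a nonnegative weight $w_\tau$ (with $\sum w_\tau=1$). Consider the quantity
\begin{equation*}
S=\sum_{\tau_1,\tau_2}w_{\tau_1}w_{\tau_2}\sum_{k,k'}\bigl|\widehat{CF}_{\mathbf{C}^{k},\mathbf{C}^{k'}}(\tau_1-\tau_2)\bigr|^2 .
\end{equation*}
An upper bound on $S$ comes from splitting into in-phase autocorrelation terms, equal to $(MN)^2$, and the remaining terms, each bounded by $\hat{\theta}_{\text{max}}^2$. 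A lower bound comes from the Gram matrix / Cauchy--Schwarz argument as in Levenshtein's bound: $S\ge (MN)^2K'^2\,\|\cdot\|$ is controlled by a quadratic form in $w$ over $M$ flocks. This gives an inequality of the form
\begin{equation*}
\hat{\theta}_{\text{max}}^2\ge MN\Bigl(1-\frac{M\,\mathcal{Q}(w)}{K'}\Bigr)\Big/(\text{normalisation}),
\end{equation*}
where $\mathcal{Q}(w)=\sum_{\tau_1,\tau_2}w_{\tau_1}w_{\tau_2}(N-|\tau_1-\tau_2|)/N$-type expressions appear.

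\textbf{Step 3: choose the weights.} Take $w$ uniform on a window of length $L=\lfloor N\sqrt{3M/K'}\rfloor$. This follows the choice of Liu et al.\ for QCSSs and Wang et al.\ for DRCSSs. The hypothesis $N\sqrt{3M/(KZ_y)}\le Z_x$ guarantees that all shifts used stay inside the zone $|\tau|<Z_x$. Evaluating the quadratic form yields approximately $L/(3N)+MN/(K'L)$-type terms. Minimizing in $L$ gives the factor $1-2\sqrt{M/(3K')}$. The condition $K>3M/Z_y$ ensures $L\ge1$ and that the bound is nontrivial.

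The main obstacle is Step 3. Carrying out the weight optimization and the rounding of $L$ cleanly enough to obtain exactly $\sqrt{MN(1-2\sqrt{M/(3KZ_y)})}$ requires care. I would try first to cite the existing QCSS low-correlation-zone bound directly with $K$ replaced by $KZ_y$, so that only Step 1 needs a fresh proof.
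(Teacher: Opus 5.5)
The paper gives no proof of this lemma (it is imported from \cite{D2}), so there is no in-text argument to compare against. Your route is the natural one and does prove the bound: Doppler-modulate into $K'=KZ_y$ complementary matrices with the one-sided range $v\in\{0,\dots,Z_y-1\}$, then run the Levenshtein-type weighted argument of \cite{C17} with a uniform window of shifts inside the zone. Step~1 is correct as written. Steps~2--3 need repair, and the step you call the main obstacle closes in one line.

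\textbf{Step 2.} Your displayed inequality has the wrong shape. Let $x_k^{(\tau)}\in\mathbb{C}^{M(N+L-1)}$ be the $k$-th matrix with each row zero-padded and shifted by $\tau$, and set $A=\sum_{k,\tau}w_\tau x_k^{(\tau)}\bigl(x_k^{(\tau)}\bigr)^{\mathrm{H}}$ (conjugate transpose). Then
\begin{equation*}
S=\|A\|_F^2\ge\sum_j A_{jj}^2=MK'^2\sum_{\tau_1,\tau_2}w_{\tau_1}w_{\tau_2}\bigl(N-|\tau_1-\tau_2|\bigr).
\end{equation*}
On the other side, $S\le K'(MN)^2\|w\|_2^2+\hat{\theta}_{\text{max}}^2\bigl(K'^2-K'\|w\|_2^2\bigr)$, provided the support of $w$ has diameter less than $Z_x$. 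Hence
\begin{equation*}
\hat{\theta}_{\text{max}}^2\ge\frac{MK'\sum_{\tau_1,\tau_2}w_{\tau_1}w_{\tau_2}\bigl(N-|\tau_1-\tau_2|\bigr)-M^2N^2\|w\|_2^2}{K'-\|w\|_2^2}.
\end{equation*}
Take $w$ uniform on $L\le N$ consecutive shifts, so that $\sum_{\tau_1,\tau_2}w_{\tau_1}w_{\tau_2}|\tau_1-\tau_2|=\frac{L}{3}-\frac{1}{3L}$. The bound becomes $\hat{\theta}_{\text{max}}^2\ge\frac{MK'(LN-L^2/3+1/3)-M^2N^2}{K'L-1}$, which is where your $L/(3N)+MN/(K'L)$ comes from.

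\textbf{Step 3.} The choice $L=\lfloor N\sqrt{3M/K'}\rfloor$ can give $L=0$ (when $K'>3MN^2$). The hypothesis $K>3M/Z_y$ does not prevent this: it only says $a:=\sqrt{3M/K'}<1$, which bounds the window from above. Take $L=\lceil aN\rceil$ instead. Then $1\le L\le Z_x$, because $Z_x$ is an integer with $Z_x\ge aN$, so every difference $|\tau_1-\tau_2|\le L-1$ lies in the zone. Writing $M=a^2K'/3$, a direct expansion gives
\begin{equation*}
MK'\Bigl(LN-\tfrac{L^2}{3}+\tfrac{1}{3}\Bigr)-M^2N^2-MN\Bigl(1-\tfrac{2a}{3}\Bigr)(K'L-1)=M\Bigl[\tfrac{K'}{3}\bigl(1-(L-aN)^2\bigr)+N\Bigl(1-\tfrac{2a}{3}\Bigr)\Bigr].
\end{equation*}
The right-hand side is positive because $0\le L-aN<1$ and $a<1$. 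Divide by $K'L-1>0$ and note $\frac{2a}{3}=2\sqrt{M/(3KZ_y)}$; this yields exactly the stated bound, with no rounding delicacy left.

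\textbf{Your fallback.} Citing the QCSS bound of \cite{C17} verbatim does not quite work, since that bound is stated for the full shift range. What you need is the observation above: the argument only uses shifts with $|\tau_1-\tau_2|<L\le Z_x$. That is precisely what the hypothesis $N\sqrt{3M/(KZ_y)}\le Z_x$ is for.
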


\begin{remark}
	Typically, an optimality factor $\hat{\rho}$ is used to describe the closeness between the maximum AF magnitude of the DRCSS and theoretical lower bound, which is defined by
	\begin{equation}
		\hat{\rho}=\frac{\hat{\theta}_{\text{max}}}{\hat{\theta}_{\text{opt}}}.
	\end{equation}
	Obviously, $\hat{\rho} \geq 1$. If $\hat{\rho}=1$, the DRCSS is said to be optimal. If $\displaystyle\lim_{N\to \infty} \hat{\rho}=1$, the DRCSS is said to be asymptotically optimal.
\end{remark}

\subsection{Peak-to-Average Power Ratio}
Multi-carrier modulation systems suffer from the PAPR problem, which is a key technical challenge. High PAPR can cause saturation in the transmitter front-end circuits, resulting in nonlinear distortion and significantly reducing transmission efficiency. In \cite{R20}, Liu et al. pointed out that effective PAPR mitigation in such systems fundamentally requires to decrease the maximum PAPR of the column sequences within each complementary matrix. For a multicarrier system with $M$ sub-carriers, let $\mathbf{u}=(u_0,u_1,\cdots,u_{M-1})$ be a complex-valued sequence of length $M$ which is spread over $M$ sub-carriers. The time domain multi-carrier signal is defined by

\begin{equation}
	s_\mathbf{u}(t)=
	\begin{cases}
		\sum_{m=0}^{M-1}u_m e^{\sqrt{-1}2\pi mt}, & 0\leq t <1, \\
		0,  & \text{otherwise},
	\end{cases}
\end{equation}
where the carrier spacing has been normalized to $1$. The average power of the multi-carrier signal is $A_\mathbf{u}=M$. Denote the instantaneous power by $P_\mathbf{u}(t)=|s_\mathbf{u}(t)|^2$. The PAPR of the sequence $\mathbf{u}$ under the multi-carrier modulation is defined by 

\begin{equation}
	\text{PAPR}(\mathbf{u})=\max_{0\leq t <1} \frac{P_\mathbf{u}(t)}{A_\mathbf{u}}=\frac{1}{M} \max_{0\leq t <1} |s_\mathbf{u}(t)|^2.
\end{equation}

Besides, the instantaneous-to-average power ratio (IAPR) of $\mathbf{u}$ is defined as 
$$\text{IAPR}_{\mathbf{u}}(t)=\frac{|s_\mathbf{u}(t)|^2}{M}, \;\; 0\leq t <1. $$
Then the PAPR of $\mathbf{u}$ satisfies
$$\text{PAPR}(\mathbf{u})=\max_{0\leq t <1} \text{IAPR}_{\mathbf{u}}(t).$$

In practice, the column sequence PAPR of the complementary matrices in a DRCSS should be taken into consideration. It is desirable that the maximum PAPR of the column sequences for all complementary matrices in a DRCSS is much smaller than $M$.

\subsection{The m-sequence}
Let $q=p^n$, where $p$ is a prime and $n$ is a positive integer. For a positive integer $r$, the trace function from $\mathbb{F}_{q^r}$ to $\mathbb{F}_q$ is defined by
$$\mathrm{Tr}_n^{rn}(x)=x+x^q+\cdots+x^{q^{r-1}}, \;\; x \in \mathbb{F}_q.$$ 
Let $\beta$ be a primitive element of $\mathbb{F}_{q^r}$, the well-known $q$-ary $m$-sequence of period $q^r-1$ is defined by 
$$\mathbf{s}=(s(0),s(1),\cdots,s(q^r-2)),$$
where $s(t)=\mathrm{Tr}_n^{rn}(\beta^t)$ with $t=0,1,\cdots,q^r-2$.

The following lemma is vital to calculate the aperiodic ambiguity magnitude of some families of DRCSSs in this paper.
\begin{lemma}\cite{R21}\label{lemmaD1}
	Let $\mathbf{s}$ be the $m$-sequence defined above. Then every segment of $\frac{q^r-1}{q-1}$ consecutive symbols from $\mathbf{s}$ contains exactly $\frac{q^{r-1}-1}{q-1}$ zeros.
	In particular, when $r=2$, $s(t)=\mathrm{Tr}_n^{2n}(\beta^t)$. In this case, suppose that $s(e)=0$ for $0\leq e \leq q$. Then, $s(t)=0$ if and only if $t \equiv e \pmod{q+1}$.
\end{lemma}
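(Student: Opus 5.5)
The statement to prove is Lemma~\ref{lemmaD1}, a classical fact about $m$-sequences. I will prove it by counting, using only that $\mathrm{Tr}_n^{rn}$ is a surjective $\mathbb{F}_q$-linear map from $\mathbb{F}_{q^r}$ to $\mathbb{F}_q$. Its kernel $V$ is therefore an $(r-1)$-dimensional $\mathbb{F}_q$-subspace of $\mathbb{F}_{q^r}$, so $V$ contains $q^{r-1}-1$ nonzero elements.

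\textbf{Step 1: reduce to counting lines in $V$.} Set $d=\frac{q^r-1}{q-1}$. The element $\beta^{d}$ has multiplicative order $q-1$, so $\langle\beta^d\rangle=\mathbb{F}_q^*$. Write $\mathbb{F}_{q^r}^*=\bigsqcup_{j=0}^{d-1}\beta^j\mathbb{F}_q^*$. This is a disjoint union of the $d$ one-dimensional $\mathbb{F}_q$-subspaces, each with $0$ removed.

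\textbf{Step 2: show the zero pattern is periodic with period $d$.} Consider any segment $t_0,t_0+1,\dots,t_0+d-1$. Its exponents run over a complete residue system modulo $d$. Hence the elements $\beta^{t}$ in the segment give exactly one representative from each coset $\beta^j\mathbb{F}_q^*$. By $\mathbb{F}_q$-linearity, $s(t)=0$ holds if and only if the whole line $\beta^t\mathbb{F}_q$ lies in $V$.

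\textbf{Step 3: count the zeros in a segment.} By Step 2, the number of zeros in the segment equals the number of one-dimensional subspaces contained in $V$. That number is $\frac{q^{r-1}-1}{q-1}$, which proves the first claim.

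\textbf{Step 4: the case $r=2$.} Here $d=q+1$, and the count gives exactly one zero in every window of length $q+1$. Since $s(t)=0$ depends only on $t \bmod (q+1)$ (Step 2), the zero at $e\in[0,q]$ is unique. Hence $s(t)=0$ if and only if $t\equiv e\pmod{q+1}$.

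The only point needing care is the equivalence in Step 2. That equivalence is simply the statement that $\mathrm{Tr}(\lambda x)=\lambda\,\mathrm{Tr}(x)$ for $\lambda\in\mathbb{F}_q^*$, and that $\lambda\neq 0$. Beyond this, I expect no real obstacle.
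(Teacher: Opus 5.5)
Your proof is correct and complete. The paper itself gives no proof of this lemma; it quotes the result from \cite{R21}, so there is no argument in the paper to compare yours against, and yours supplies what the paper leaves to the reference.

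What you have written is the standard argument. Since $\beta^{d}\in\mathbb{F}_q^*$ and the trace is $\mathbb{F}_q$-linear, one has $s(t+d)=\beta^{d}s(t)$. Hence the zero set of $\mathbf{s}$ is a union of residue classes modulo $d=\frac{q^r-1}{q-1}$. The number of these classes equals the number of $\mathbb{F}_q$-lines in the hyperplane $\ker \mathrm{Tr}_n^{rn}$, namely $\frac{q^{r-1}-1}{q-1}$. Equivalently, the $q^{r-1}-1$ zeros in one full period split evenly over $q-1$ blocks of length $d$.

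Two small points are worth making explicit in a write-up.

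\textbf{Surjectivity of the trace.} This follows because $x+x^q+\cdots+x^{q^{r-1}}$ has degree $q^{r-1}<q^r$, so it is not identically zero on $\mathbb{F}_{q^r}$. A nonzero linear functional onto the one-dimensional space $\mathbb{F}_q$ is surjective.

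\textbf{Wrap-around segments.} A segment of $d$ consecutive symbols may wrap past the period $q^r-1$. The paper does use exponents such as $k(q-1)+t$ that exceed $q^2-2$. Because you define $s(t)=\mathrm{Tr}_n^{rn}(\beta^t)$ for every integer $t$, and use only residues modulo $d$ (which divides $q^r-1$), your argument covers this case without change.
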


\section{Five new constructions of aperiodic DRCSSs}\label{s3}
The objective of this section to present five new constructions of aperiodic DRCSSs with novel parameters, which are asymptotically optimal with respect to the lower bound in \eqref{eqD1}. In this section, we let $\alpha$ be a primitive element of $\mathbb{F}_q$ and $\beta$ be a primitive element of $\mathbb{F}_{q^2}$.

\subsection{The first construction of aperiodic DRCSSs}
Let $i$ denotes the row index and $j$ denotes the column index for $0\leq i, j < q$, define a matrix $\Psi=[\psi_i^j]$ satisfying the following properties. 
\begin{itemize}
	\item Each entry of $\Psi$ is a $p$-th root of unity, i.e., $\psi_i^j \in \{\xi_p^0, \xi_p^1, \cdots, \xi_p^{p-1}\}$;
	\item The columns of matrix $\Psi$ are pairwise orthogonal.
\end{itemize}

Let $\phi(\cdot)$ be an arbitrary one-to-one mapping from $\mathbb{F}_q$ to $\mathbb{Z}_q$. 
We define a DRCS set
\begin{equation}\label{equation3.1}
	\mathcal{C}=\{\mathbf{C}^{k}: 0\leq k \leq q\},
\end{equation}
where $\mathbf{C}^{k}=[\mathbf{c}_0^{k}, \mathbf{c}_1^{k}, \cdots ,\mathbf{c}_{q-1}^{k}]^{\mathrm{T}}$ consist of $K=q+1$ sequences of length $N=q$ defined by
$$\mathbf{c}_m^{k}=\{c_m^{k}(t)\}_{t=0}^{q-1},$$ 
$$c_m^{k}(t)=\psi_m^{\phi(\mathrm{Tr}_n^{2n}(\beta^{k(q-1)+t}))}, \;\; 0\leq m< q.$$
It is straightforward to show that the alphabet size of $\mathcal{C}$ is $p$.

\begin{theorem}\label{theorem1}
	Let $q=p^n>2$, where $p$ is a prime and $n$ is a positive integer. Then $\mathcal{C}$ is an aperiodic $(q+1,q,q,q)$-DRCSS with alphabet size $p$ which is asymptotically optimal with respect to the lower bound in \eqref{eqD1}.
\end{theorem}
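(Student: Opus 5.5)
The plan is to compute $\widehat{AF}_{\mathbf{C}^{k_1},\mathbf{C}^{k_2}}(\tau,v)$ by first summing over the flock index $m$, so that the column orthogonality of $\Psi$ collapses the inner sum. For $0\le\tau\le q-1$, write $s(t)=\mathrm{Tr}_n^{2n}(\beta^t)$. Then
\begin{equation*}
\widehat{AF}_{\mathbf{C}^{k_1},\mathbf{C}^{k_2}}(\tau,v)=\sum_{t=0}^{q-1-\tau}\xi_q^{tv}\sum_{m=0}^{q-1}\psi_m^{\phi(s(k_1(q-1)+t))}\bigl(\psi_m^{\phi(s(k_2(q-1)+t+\tau))}\bigr)^*.
\end{equation*}
Because the columns of $\Psi$ are pairwise orthogonal and have unimodular entries, the inner sum equals $q$ if $s(k_1(q-1)+t)=s(k_2(q-1)+t+\tau)$ and $0$ otherwise; here we use that $\phi$ is a bijection. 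Negative $\tau$ is handled symmetrically, since $\widehat{AF}_{\mathbf{a},\mathbf{b}}(-\tau,v)$ is a conjugate-type relabelling of the positive case. So everything reduces to counting, for each pair $(k_1,k_2,\tau)$, the set
\begin{equation*}
T=\{t: 0\le t\le q-1-\tau,\ s(u+t)=s(u'+t)\},
\end{equation*}
where $u=k_1(q-1)$ and $u'=k_2(q-1)+\tau$. The resulting bound is $|\widehat{AF}|\le q\,|T|$.

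The next step uses linearity of the trace. We have $s(u+t)-s(u'+t)=\mathrm{Tr}_n^{2n}\bigl((\beta^{u}-\beta^{u'})\beta^t\bigr)$. There are two cases.

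\emph{Case $\beta^u=\beta^{u'}$.} This means $u\equiv u' \pmod{q^2-1}$. Since $0\le u,u'\le q(q-1)+q-1<q^2-1$ in the relevant range, this forces $k_1=k_2$ and $\tau=0$. That is either the trivial point $(0,0)$ of the auto-AF, or $\tau=0$ with $v\ne0$. In the latter case every $t\in[0,q-1]$ lies in $T$, and the sum becomes $q\sum_{t=0}^{q-1}\xi_q^{tv}=0$.

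\emph{Case $\beta^u\ne\beta^{u'}$.} Write $\beta^u-\beta^{u'}=\beta^{w}$. Then $T$ is the set of $t$ in a window of at most $q$ consecutive integers with $\mathrm{Tr}_n^{2n}(\beta^{w+t})=0$. By Lemma \ref{lemmaD1} with $r=2$, the zeros of the $m$-sequence occur exactly once in every $q+1$ consecutive indices. A window of length $\le q<q+1$ therefore contains at most one such $t$, so $|T|\le1$ and $|\widehat{AF}|\le q$.

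This gives $\hat\theta_{\max}\le q$ over the full region $(-q,q)\times(-q,q)$. Equality is attained for a suitable pair, which one can check by exhibiting a pair with $|T|=1$. Hence $\mathcal{C}$ is a $(q+1,q,q,q)$-DRCSS, and the alphabet is contained in the $p$-th roots of unity.

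For asymptotic optimality, plug $K=q+1$, $M=N=Z_x=Z_y=q$ into \eqref{eqD1}. First check the hypotheses: $K>3M/Z_y=3$ needs $q>2$, and $N\sqrt{3M/(KZ_y)}=q\sqrt{3/(q+1)}\le q$ holds for $q\ge2$. This gives
\begin{equation*}
\hat\theta_{\text{opt}}=q\sqrt{1-2\sqrt{\tfrac{1}{3(q+1)}}},
\qquad
\hat\rho=\Bigl(1-\tfrac{2}{\sqrt{3(q+1)}}\Bigr)^{-1/2}\to1 \quad (q\to\infty).
\end{equation*}

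I expect the main obstacle to be the bookkeeping in the collapse step. Two points need care. The first is that orthogonality of the columns $j$ and $j'$ of $\Psi$ gives exactly $\sum_m\psi_m^j(\psi_m^{j'})^*=q\delta_{j,j'}$. The second is the negative-shift and index-range details, which show that $u\equiv u'$ really forces $k_1=k_2$ and $\tau=0$; this needs $k(q-1)+\tau$ to range without wraparound modulo $q^2-1$ when $0\le k\le q$ and $|\tau|<q$. For negative $\tau$, I would swap the roles of the two sequences to reduce to the case already done.
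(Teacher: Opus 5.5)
Your route is the paper's: collapse the $m$-sum using column orthogonality of $\Psi$ and injectivity of $\phi$, rewrite the coincidence condition as $\mathrm{Tr}_n^{2n}\bigl((\beta^{u}-\beta^{u'})\beta^{t}\bigr)=0$, and invoke Lemma~\ref{lemmaD1} with $r=2$ to get at most one zero in a window of at most $q$ consecutive indices. You are in fact more careful than the paper about $\tau<0$ and about the hypotheses $K>3M/Z_y$ and $N\sqrt{3M/(KZ_y)}\le Z_x$ of \eqref{eqD1}. But one step is false: $\beta^{u}=\beta^{u'}$ does not force $k_1=k_2$ and $\tau=0$. Your inequality $q(q-1)+q-1<q^2-1$ is actually an equality, so wraparound occurs at $k_2=q$, $\tau=q-1$. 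Even without wraparound, $u=u'$ means $\tau=(k_1-k_2)(q-1)$, which allows $\tau=q-1$, $k_1=k_2+1$. Precisely, for $\tau=q-1$ you get $\beta^{u}=\beta^{u'}$ iff $k_1\equiv k_2+1\pmod{q+1}$. That is, either $(k_1,k_2)=(j+1,j)$ with $0\le j\le q-1$, or $(k_1,k_2)=(0,q)$. This reflects the fact that the first column of $\mathbf{C}^{k+1}$ equals the last column of $\mathbf{C}^{k}$. Likewise, the first column of $\mathbf{C}^{0}$ equals the last column of $\mathbf{C}^{q}$, since $s(q^2-1)=s(0)$. These triples fall in neither of your two cases, because $\beta^{u}-\beta^{u'}=0$ is not of the form $\beta^{w}$.

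The fix is one line. When $\tau=q-1$ the range $0\le t\le q-1-\tau$ is just $\{0\}$, so $|T|\le 1$ automatically, and in the coincidence cases $\widehat{AF}_{\mathbf{C}^{k_1},\mathbf{C}^{k_2}}(q-1,v)=q$ exactly. This also gives the explicit pair attaining $\hat{\theta}_{\max}=q$ that you left unspecified. Note that the paper singles out $(\tau,k_2-k_1)=(q-1,q)$, but its claim that every other pair gives $(k_2-k_1)(q-1)+\tau\not\equiv 0\pmod{q^2-1}$ overlooks $k_2-k_1=-1$. So the exhaustive list above is what you want, not the paper's case split.
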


\begin{proof}
	According to the definition of DRCSS, we divide the proof into two cases:
	
	\textbf{Case 1: (auto-AF)} For any $0\leq k \leq q$ , we have
	$$
	\begin{aligned}
		\widehat{AF}_{\mathbf{C}^{k}}(\tau,v)
		&= \sum_{m=0}^{q-1} \widehat{AF}_{\mathbf{c}_m^{k}}(\tau,v) \\
		&= \sum_{m=0}^{q-1} \sum_{t=0}^{q-1-\tau} c_m^{k}(t) (c_m^{k}(t+\tau))^* \xi_{q}^{tv}  \\
		&= \sum_{m=0}^{q-1} \sum_{t=0}^{q-1-\tau} 
		 \psi_m^{\phi(\mathrm{Tr}_n^{2n}(\beta^{k(q-1)+t}))} (\psi_m^{\phi(\mathrm{Tr}_n^{2n}(\beta^{k(q-1)+t+\tau}))})^* \xi_{q}^{tv}\\
		&= \sum_{t=0}^{q-1-\tau} \xi_{q}^{tv} \sum_{m=0}^{q-1}  
		\psi_m^{\phi(\mathrm{Tr}_n^{2n}(\beta^{k(q-1)+t}))} (\psi_m^{\phi(\mathrm{Tr}_n^{2n}(\beta^{k(q-1)+t+\tau}))})^*.
	\end{aligned}
	$$
	
	Since $\phi(\cdot)$ is a one-to-one mapping from $\mathbb{F}_q$ to $\mathbb{Z}_q$, then $\psi_m^{\phi(\mathrm{Tr}_n^{2n}(\beta^{k(q-1)+t}))}=\psi_m^{\phi(\mathrm{Tr}_n^{2n}(\beta^{k(q-1)+t+\tau}))}$ for all $m=0,1,\cdots,q-1$ if and only if $\mathrm{Tr}_n^{2n}(\beta^{k(q-1)+t})=\mathrm{Tr}_n^{2n}(\beta^{k(q-1)+t+\tau})$. Since the columns of matrix $\Psi$ are pairwise orthogonal, thus we only need to consider the $\mathrm{Tr}_n^{2n}(\beta^{k(q-1)+t}(1-\beta^{\tau}))=0$.

	When $\tau=0$, we have 
	$$\widehat{AF}_{\mathbf{C}^{k}}(0,v)=q\sum_{t=0}^{q-1} \xi_{q}^{tv}=\begin{cases}
		q^2, &\text{ if } v=0,\\
		0, &\text{ if } v \neq 0.
	\end{cases}$$
	
	When $\tau \neq 0$, by Lemma \ref{lemmaD1}, we deduce that $\mathrm{Tr}_n^{2n}(\beta^{k(q-1)+t}(1-\beta^{\tau}))=0$ with variable $t$ has at most one solution if $0 \leq t \leq q-1-\tau$.
	
	If $\mathrm{Tr}_n^{2n}(\beta^{k(q-1)+t}(1-\beta^{\tau})) \neq 0$ for $0 \leq t \leq q-1-\tau$, then 
	$$\widehat{AF}_{\mathbf{C}^{k}}(\tau,v)=0.$$
	
	If there exists a unique solution $0\leq t^{\prime} \leq q-1-\tau$ such that  $\mathrm{Tr}_n^{2n}(\beta^{k(q-1)+t^{\prime}}(1-\beta^{\tau})) = 0$, then 
	$$\begin{aligned}
	\widehat{AF}_{\mathbf{C}^{k}}(\tau,v)=&q \xi_q^{t^{\prime}v}+\sum_{t=0,t\neq t^{\prime}}^{q-1-\tau} \xi_q^{tv} \sum_{m=0}^{q-1}\psi_m^{\phi(\mathrm{Tr}_n^{2n}(\beta^{k(q-1)+t}))} (\psi_m^{\phi(\mathrm{Tr}_n^{2n}(\beta^{k(q-1)+t+\tau}))})^*\\
	=&q \xi_q^{t^{\prime}v}.
	\end{aligned}$$

	\textbf{Case 2:(cross-AF)} For any $0 \leq k_1, k_2 \leq q$ and $k_1 \neq k_2$, we have
	$$
	\begin{aligned}
		\widehat{AF}_{\mathbf{C}^{k_1},\mathbf{C}^{k_2}}(\tau,v)
		&= \sum_{m=0}^{q-1} \widehat{AF}_{\mathbf{c}_m^{k_1},\mathbf{c}_m^{k_2}}(\tau,v) \\
		&= \sum_{m=0}^{q-1} \sum_{t=0}^{q-1-\tau} c_m^{k_1}(t) (c_m^{k_2}(t+\tau))^* \xi_{q}^{tv}  \\
		&= \sum_{m=0}^{q-1} \sum_{t=0}^{q-1-\tau} \psi_m^{\phi(\mathrm{Tr}_n^{2n}(\beta^{k_1(q-1)+t}))} (\psi_m^{\phi(\mathrm{Tr}_n^{2n}(\beta^{k_2(q-1)+t+\tau}))})^* \xi_{q}^{tv}\\
		&= \sum_{t=0}^{q-1-\tau} \xi_{q}^{tv} \sum_{m=0}^{q-1} \psi_m^{\phi(\mathrm{Tr}_n^{2n}(\beta^{k_1(q-1)+t}))} (\psi_m^{\phi(\mathrm{Tr}_n^{2n}(\beta^{k_2(q-1)+t+\tau}))})^*.
	\end{aligned}
	$$
	
Since $\phi(\cdot)$ is a one-to-one mapping from $\mathbb{F}_q$ to $\mathbb{Z}_q$, then $\psi_m^{\phi(\mathrm{Tr}_n^{2n}(\beta^{k_1(q-1)+t}))}=\psi_m^{\phi(\mathrm{Tr}_n^{2n}(\beta^{k_2(q-1)+t+\tau}))}$ for all $m=0,1,\cdots,q-1$ if and only if $\mathrm{Tr}_n^{2n}(\beta^{k_1(q-1)+t})=\mathrm{Tr}_n^{2n}(\beta^{k_2(q-1)+t+\tau})$. Since the columns of matrix $\Psi$ are pairwise orthogonal, thus we only need to consider the $\mathrm{Tr}_n^{2n}(\beta^{k_1(q-1)+t}(1-\beta^{(k_2-k_1)(q-1)+\tau}))=0$.

If $\tau=q-1$ and $k_2-k_1=q$, then $t=0$ and $(k_2-k_1)(q-1)+\tau \equiv0 \pmod{q^2-1}$. Accordingly, $1-\beta^{(k_2-k_1)(q-1)+\tau}=0$ and $\psi_m^{\phi(\mathrm{Tr}_n^{2n}(\beta^{k_1(q-1)+t}))}= \psi_m^{\phi(\mathrm{Tr}_n^{2n}(\beta^{k_2(q-1)+t+\tau}))}$ for all $m=0,1,\cdots,q-1$. Consequently,
$$\widehat{AF}_{\mathbf{C}^{k_1},\mathbf{C}^{k_2}}(\tau,v)=q.$$ 

If $(\tau,k_2-k_1)\neq (q-1,q)$, then $(k_2-k_1)(q-1)+\tau \not \equiv0 \pmod{q^2-1}$. By Lemma \ref{lemmaD1}, we deduce that $\mathrm{Tr}_n^{2n}(\beta^{k_1(q-1)+t}(1-\beta^{(k_2-k_1)(q-1)+\tau}))=0$ has at most one solution for $0 \leq t \leq q-1-\tau$. Similarly to Case 1 above, we also have 
$$|\widehat{AF}_{\mathbf{C}^{k_1},\mathbf{C}^{k_2}}(\tau,v)| \in \{0,q\}.$$
	
	Summarizing above cases, we deduce that the maximum aperiodic ambiguity magnitude of $\mathcal{C}$ is $q$. Then $\mathcal{C}$ is an aperiodic $(q+1,q,q,q)$-DRCSS.
	
	According to the lower bound in \eqref{eqD1}, we have $$\hat{\theta}_{\text{opt}}=\sqrt{q^2(1-2\sqrt{\frac{q}{3(q+1)q}})}=q\sqrt{1-2\sqrt{\frac{1}{3(q+1)}}}.$$
	It is easy to see that 
	$$\displaystyle\lim_{q\to \infty} \hat{\rho}
	=\displaystyle\lim_{q\to\infty} \frac{q}{q\sqrt{1-2\sqrt{\frac{1}{3(q+1)}}}}=1.$$
	Thus the aperiodic DRCSS $\mathcal{C}$ is asymptotically optimal.
\end{proof}
			
			\begin{table}[htbp]  
				\centering 
				\caption{The parameters of some aperiodic DRCSSs in Theorem \ref{theorem1} for $q=p$ and $5\leq q \leq 43$} 
				\label{table1} 
				\begin{tabular}{ccccccc} 
					\toprule 
					q     & K & M & N &$\hat{\theta}_\text{max}$  & $\hat{\theta}_{\text{opt}}$ & $\hat{\rho}$   \\
					\midrule 
					5     & 6 & 5 & 5 &5         & 3.6352 & 1.3754 \\
					7     & 8 & 7 & 7 &7         & 5.3848 & 1.3000 \\
					11    & 12 & 11 & 11 &11     & 8.9815 & 1.2247 \\
					13    & 14 & 13 & 13 &13     & 10.8095 & 1.2026\\
					17    & 18 & 17 & 17 &17     & 14.5032 & 1.1722 \\
					19    & 20 & 19 & 19 &19     & 16.3643 & 1.1611\\
					23    & 24 & 23 & 23 &23     & 20.1075 & 1.1438 \\
					29    & 30 & 29 & 29 &29     & 25.7624 & 1.1257 \\
					31    & 32 & 31 & 31 &31     & 27.6557 & 1.1209 \\
					37    & 38 & 37 & 37 &37     & 33.3557 & 1.1093\\
					41    & 42 & 41 & 41 &41     & 37.1684 & 1.1031 \\
					43    & 44 & 43 & 43 &43     & 39.0785 & 1.1003 \\
					\bottomrule 
				\end{tabular}
			\end{table}
			
		\begin{remark}
			In Table \ref{table1}, we list the parameters of some aperiodic DRCSSs $\mathcal{C}$ constructed in Theorem \ref{theorem1}. The numerical date show that the optimal factor of $\mathcal{C}$ approaches $1$ quickly when $q$ increases.
		\end{remark}
		
		\begin{table}[htbp]
			\centering
			\caption{The DRCS Set $\mathcal{C}$ in Example \ref{example1}}
			\label{table2}
			\begin{tabular}{|c|c|c|}
				\hline
				$\mathbf{C}^{0}$ & $\mathbf{C}^{1}$  & $\mathbf{C}^{2}$ \\
				\hline
				$\begin{bmatrix}          
					0 0 0 0 0  \\
					2 3 2 0 1 \\
					4 1 4 0 2 \\
					3 2 3 0 4 \\
					1 4 1 0 3 
				\end{bmatrix}$
				&
				$\begin{bmatrix}
				0 0 0 0 0  \\
				1 3 3 4 3 \\
				2 1 1 3 1 \\
				4 2 2 1 2 \\
				3 4 4 2 4 
				\end{bmatrix}$
				&
				$\begin{bmatrix}
					0 0 0 0 0  \\
					3 0 2 4 4 \\
					1 0 4 3 3 \\
					2 0 3 1 1 \\
					4 0 1 2 2 
				\end{bmatrix}$ \\
				\hline
				$\mathbf{C}^{3}$ & $\mathbf{C}^{4}$ & $\mathbf{C}^{5}$ \\
				\hline
				$\begin{bmatrix}
				0 0 0 0 0  \\
				2 1 2 0 3 \\
				4 2 4 0 1 \\
				3 4 3 0 2 \\
				1 3 1 0 4 
				\end{bmatrix}$
				&
				$\begin{bmatrix}
				0 0 0 0 0  \\
				3 1 1 2 1 \\
				1 2 2 4 2 \\
				2 4 4 3 4 \\
				4 3 3 1 3 
				\end{bmatrix}$
				&
				$\begin{bmatrix}
			    0 0 0 0 0  \\
				1 0 4 2 2 \\
				2 0 3 4 4 \\
				4 0 1 3 3 \\
				3 0 2 1 1 
				\end{bmatrix}$ \\
				\hline
			\end{tabular}
		\end{table}
		
		\begin{figure}[!htbp]  
			\centering
			\includegraphics[width=0.9\linewidth]{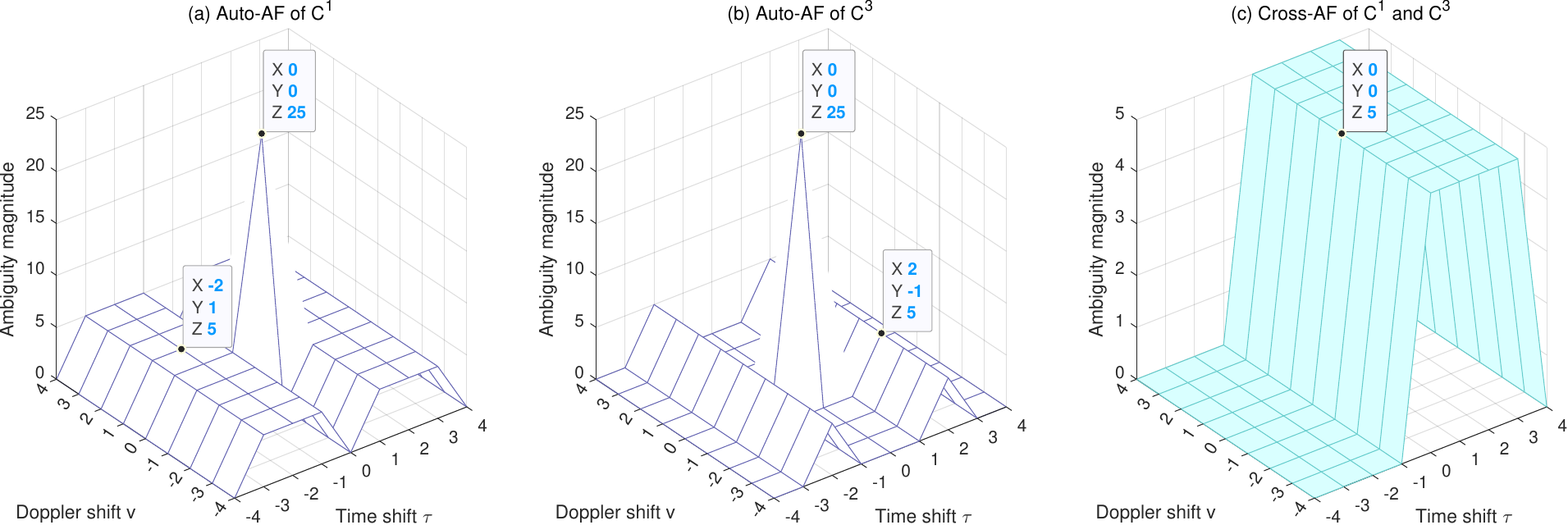}
			\caption{The aperiodic auto-ambiguity and cross-ambiguity magnitude of $\mathbf{C}^{1}$ and $\mathbf{C}^{3}$ in Example \ref{example1}}
			\label{Fig1}  
		\end{figure}
		
		\begin{example}\label{example1}
			Let $q=5$ and $\beta$ be a primitive element of $\mathbb{F}_{5^2}$ satisfying $\beta^2+\beta+2=0$. Consider a one-to-one mapping function from $\mathbb{F}_5$ to $\mathbb{Z}_5$, defined by $\phi(x)=x$ for $x \in \mathbb{F}_5$. Choose  
			$$\Psi=\begin{bmatrix}
				 0 0 0 0 0\\
				 0 1 2 4 3\\
				0 2 4 3 1\\
				0 4 3 1 2\\
				0 3 1 2 4
			\end{bmatrix},$$
			where each entry stands for a power of $\xi_5$. Then $\mathcal{C}$ is a $(6,5,5,5)$-DRCSS. According to Theorem \ref{theorem1}, we obtain a DRCSS $\mathcal{C}$ are known in Table \ref{table2}, where each entry represents a power of $\xi_5$. By MATLAB problem, we respectively show the auto-ambiguity magnitude distribution of $\mathbf{C}^1$, the auto-ambiguity magnitude distribution of $\mathbf{C}^3$ and the cross-ambiguity magnitude distribution of $\mathbf{C}^1$ and $\mathbf{C}^3$ in Fig. \ref{Fig1}.
		\end{example}
		
		\begin{remark}\label{remark}
			We could derive an aperiodic DRCSSs $\mathcal{C}$ in Theorem \ref{theorem1} with low PAPR. Choose any column sequence of $\mathbf{C}^k$ in Theorem \ref{theorem1} as
			$$\mathbf{c}(t)=(\psi_0^{\phi(\mathrm{Tr}_n^{2n}(\beta^{k(q-1)+t}))},\psi_1^{\phi(\mathrm{Tr}_n^{2n}(\beta^{k(q-1)+t}))},\cdots,\psi_{q-1}^{\phi(\mathrm{Tr}_n^{2n}(\beta^{k(q-1)+t}))})^{\mathrm{T}}.$$
			Hence it sufficient to find a suitable column orthogonal matrix $\Psi$ with low PAPR. In fact, there exist such matrices in the literature. For the case of $p=2$, the construction of column orthogonal matrices with PAPR bounded by $2$ has been extensively studied, and the interested reader is referred to \cite{q1,q2,q3,q4} for detailed constructions and theoretical proofs. For $p \geq 3$, matrices satisfying the column orthogonality and PAPR constraint of at most $p$ can be found in \cite{q5,q6}.
		\end{remark}
		
		\subsection{The second construction of aperiodic DRCSSs}
		Let $\phi(\cdot)$ be an arbitrary one-to-one mapping from $\mathbb{F}_q$ to $\mathbb{Z}_q$. 
		Let $\Psi=[\psi_i^j]_{q\times q}$ be the column orthogonal matrix defined as above, where $i$ denotes the row index and $j$ denotes the column index for $0 \leq i,j \leq q-1$.
		We define a DRCS set
		\begin{equation}\label{equation3.2}
			\mathcal{C}=\{\mathbf{C}^{k}: 0\leq k \leq q\},
		\end{equation}
		where $\mathbf{C}^{k}=[\mathbf{c}_0^{k}, \mathbf{c}_1^{k}, \cdots ,\mathbf{c}_{q-1}^{k}]^{\mathrm{T}}$ consist of $K=q+1$ sequences of length $N=q-1$ defined by
		$$\mathbf{c}_m^{k}=\{c_m^{k}(t)\}_{t=0}^{q-2},$$ 
		$$c_m^{k}(t)=\psi_m^{\phi(\mathrm{Tr}_n^{2n}(\beta^{k(q-1)+t}))}, \;\; 0\leq m< q.$$
		It is straightforward to show that the alphabet size of $\mathcal{C}$ is $p$.

		\begin{theorem}\label{theorem2}
			Let $q=p^n>3$, where $p$ is a prime and $n$ is a positive integer. Then $\mathcal{C}$ is an aperiodic $(q+1,q,q-1,q)$-DRCSS with alphabet size $p$ which is asymptotically optimal with respect to the lower bound in \eqref{eqD1}.
		\end{theorem}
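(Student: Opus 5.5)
The proof follows that of Theorem \ref{theorem1}, with length $N=q-1$ in place of $q$. Two things change. Doppler factors are now $\xi_{q-1}^{tv}$, and the delay range is $|\tau|\le q-2$. For each pair $k_1,k_2$ (possibly equal) I expand
$$\widehat{AF}_{\mathbf{C}^{k_1},\mathbf{C}^{k_2}}(\tau,v)=\sum_{t=0}^{q-2-\tau}\xi_{q-1}^{tv}\sum_{m=0}^{q-1}\psi_m^{\phi(\mathrm{Tr}_n^{2n}(\beta^{k_1(q-1)+t}))}\Big(\psi_m^{\phi(\mathrm{Tr}_n^{2n}(\beta^{k_2(q-1)+t+\tau}))}\Big)^*.$$
Negative $\tau$ is handled symmetrically, since $\widehat{AF}_{\mathbf a,\mathbf b}(-\tau,v)$ is a conjugate-type expression of the same form with the roles of the two sequences swapped. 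By column orthogonality of $\Psi$ and injectivity of $\phi$, the inner sum equals $q$ when the two traces coincide and $0$ otherwise. So everything reduces to counting $t\in[0,q-2-\tau]$ with
$$\mathrm{Tr}_n^{2n}\big(\beta^{k_1(q-1)+t}(1-\beta^{d})\big)=0,\qquad d=(k_2-k_1)(q-1)+\tau .$$

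\textbf{Auto case ($k_1=k_2$).} For $\tau=0$ the inner sum is always $q$, so the total is $q\sum_{t=0}^{q-2}\xi_{q-1}^{tv}$. This equals $q(q-1)$ at $v=0$ and vanishes otherwise. For $0<|\tau|\le q-2$ we have $\beta^\tau\ne1$, so $\gamma=\beta^{k(q-1)}(1-\beta^\tau)\neq0$. The solutions $t$ of $\mathrm{Tr}_n^{2n}(\gamma\beta^t)=0$ form a single residue class modulo $q+1$, by the $r=2$ part of Lemma \ref{lemmaD1} applied to a shifted $m$-sequence. The window $0\le t\le q-2-\tau$ has length less than $q+1$, so it contains at most one solution. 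Hence $|\widehat{AF}|\in\{0,q\}$.

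\textbf{Cross case ($k_1\ne k_2$).} Here $d=(k_2-k_1)(q-1)+\tau$ with $|k_2-k_1|\le q$ and $|\tau|\le q-2$. If $d\not\equiv0\pmod{q^2-1}$, the same argument gives at most one solution in a window of length at most $q-1<q+1$, so again $|\widehat{AF}|\in\{0,q\}$.

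The step needing care, and the main point of difference from Theorem \ref{theorem1}, is checking whether $d\equiv0\pmod{q^2-1}$ can occur. Since $q-1\mid d$ forces $q-1\mid\tau$, and $|\tau|\le q-2$, we need $\tau=0$. Then $d=(k_2-k_1)(q-1)$, and $q+1\mid k_2-k_1$ is impossible because $0<|k_2-k_1|\le q$. So the exceptional case $(\tau,k_2-k_1)=(q-1,q)$ from Theorem \ref{theorem1} disappears: the shorter length excludes $\tau=q-1$.

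It remains to confirm that the window is short enough for the residue-class argument, namely $q-1-|\tau|\le q-1<q+1$, which always holds. It is also worth checking that Lemma \ref{lemmaD1} applies to the multiplier $\gamma$. Writing $\gamma=\beta^{s}$ turns the equation into a zero of the $m$-sequence at index $s+t$, so the solutions lie in one class modulo $q+1$.

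Combining the cases gives $\hat\theta_{\max}=q$, so $\mathcal{C}$ is an aperiodic $(q+1,q,q-1,q)$-DRCSS. The alphabet size is $p$ because every entry is an entry of $\Psi$.

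For optimality, substitute $K=q+1$, $M=q$, $N=Z_x=Z_y=q-1$ into \eqref{eqD1}:
$$\hat\theta_{\text{opt}}=\sqrt{q(q-1)\Big(1-2\sqrt{\tfrac{q}{3(q+1)(q-1)}}\Big)}.$$
The side conditions hold for $q>3$. First, $K=q+1>\frac{3q}{q-1}$ is equivalent to $(q+1)(q-1)>3q$, i.e. $q^2-3q-1>0$, which holds for $q\ge4$. Second, $N\sqrt{3M/(KZ_y)}\le Z_x$ is equivalent to $3q\le(q+1)(q-1)$, i.e. $q^2-3q-1\ge0$, which follows from the first. Then $\hat\rho=q/\hat\theta_{\text{opt}}\to1$ as $q\to\infty$, since $q/\sqrt{q(q-1)}\to1$ and the square-root correction term tends to $0$.
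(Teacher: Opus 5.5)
Your proposal is correct and follows the route the paper intends: the paper omits this proof as ``very similar'' to that of Theorem~\ref{theorem1}, and you carry out exactly that adaptation with $N=q-1$ and Doppler factor $\xi_{q-1}$. You also supply the details the paper leaves implicit: that $q-1\mid d$ forces $\tau=0$, so $d\not\equiv 0\pmod{q^2-1}$ whenever $k_1\neq k_2$ and the exceptional case of Theorem~\ref{theorem1} disappears, and that the side conditions of \eqref{eqD1} reduce to $q^2-3q-1>0$, which is exactly where $q>3$ is needed.
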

		
		\begin{proof}
			The proof is very similar to that Theorem \ref{theorem1}. We omit it here.
		\end{proof}
		
		\begin{table}[htbp]  
			\centering 
			\caption{The parameters of some aperiodic DRCSSs in Theorem \ref{theorem2} for $q=p$ and $5\leq q \leq 43$} 
			\label{table3} 
			\begin{tabular}{ccccccc} 
				\toprule 
				q     & K & M & N &$\hat{\theta}_\text{max}$  & $\hat{\theta}_{\text{opt}}$ & $\hat{\rho}$   \\
				\midrule 
				5     & 6 & 5 & 4 &5         & 3.0756 & 1.6257 \\
				7     & 8 & 7 & 6 &7         & 4.8456 & 1.4446 \\
				11    & 12 & 11 & 10 &11     & 8.4583 & 1.3005 \\
				13    & 14 & 13 & 12 &13     & 10.2904 & 1.2633\\
				17    & 18 & 17 & 16 &17     & 13.9890 & 1.2152 \\
				19    & 20 & 19 & 18 &19     & 15.8517 & 1.1986\\
				23    & 24 & 23 & 22 &23     & 19.5973 & 1.1736 \\
				29    & 30 & 29 & 28 &29     & 25.2544 & 1.1483 \\
				31    & 32 & 31 & 30 &31     & 27.1482 & 1.1419 \\
				37    & 38 & 37 & 36 &37     & 32.8489 & 1.1264\\
				41    & 42 & 41 & 40 &41     & 36.6628 & 1.1183 \\
				43    & 44 & 43 & 42 &43     & 38.5732 & 1.1147 \\
				\bottomrule 
			\end{tabular}
		\end{table}
		
		\begin{remark}
			In Table \ref{table3}, we list the parameters of some aperiodic DRCSSs $\mathcal{C}$ constructed in Theorem \ref{theorem2}. The numerical date show that the optimal factor of $\mathcal{C}$ approaches $1$ quickly when $q$ increases.
		\end{remark}
		
		\begin{remark}
			Note that the DRCSS in Theorem \ref{theorem2} also depends on the column orthogonal matrix $\Psi$. Similarly to the discussions in Remark \ref{remark}, there also exist known suitable matrices $\Psi$ such that the column sequence PAPR of each complementary matrix in the aperiodic DRCSSs $\mathcal{C}$ is not larger than $p$. We omit the details here.
		\end{remark}

		\begin{table}[htbp]
			\centering
			\caption{The DRCS Set $\mathcal{C}$ in Example \ref{example2}}
			\label{table4}
			\begin{tabular}{|c|c|c|}
				\hline
				$\mathbf{C}^{0}$ & $\mathbf{C}^{1}$  & $\mathbf{C}^{2}$ \\
				\hline
				$\begin{bmatrix}          
					0 0 0 0   \\
					2 3 2 0  \\
					4 1 4 0 \\
					3 2 3 0 \\
					1 4 1 0  
				\end{bmatrix}$
				&
				$\begin{bmatrix}
					0 0 0 0   \\
					1 3 3 4  \\
					2 1 1 3  \\
					4 2 2 1  \\
					3 4 4 2  
				\end{bmatrix}$
				&
				$\begin{bmatrix}
					0 0 0 0  \\
					3 0 2 4  \\
					1 0 4 3 \\
					2 0 3 1  \\
					4 0 1 2  
				\end{bmatrix}$ \\
				\hline
				$\mathbf{C}^{3}$ & $\mathbf{C}^{4}$ & $\mathbf{C}^{5}$ \\
				\hline
				$\begin{bmatrix}
					0 0 0 0   \\
					2 1 2 0  \\
					4 2 4 0  \\
					3 4 3 0  \\
					1 3 1 0  
				\end{bmatrix}$
				&
				$\begin{bmatrix}
					0 0 0 0  \\
					3 1 1 2  \\
					1 2 2 4 \\
					2 4 4 3  \\
					4 3 3 1  
				\end{bmatrix}$
				&
				$\begin{bmatrix}
					0 0 0 0   \\
					1 0 4 2  \\
					2 0 3 4  \\
					4 0 1 3  \\
					3 0 2 1  
				\end{bmatrix}$ \\
				\hline
			\end{tabular}
		\end{table}
		
		\begin{figure}[!htbp]  
			\centering
			\includegraphics[width=0.9\linewidth]{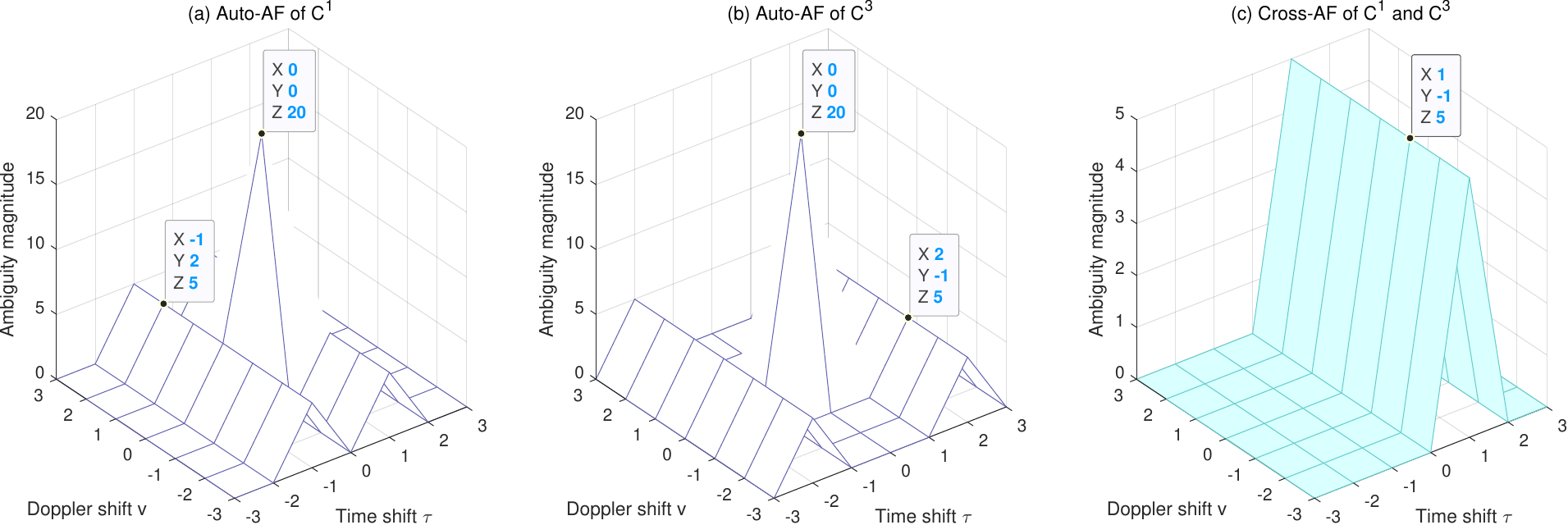}
			\caption{The aperiodic auto-ambiguity and cross-ambiguity magnitude of $\mathbf{C}^{1}$ and $\mathbf{C}^{3}$ in Example \ref{example2}}
			\label{Fig2}  
		\end{figure}
		
		\begin{example}\label{example2}
			Let $q=5$ and $\beta$ be a primitive element of $\mathbb{F}_{5^2}$ satisfying $\beta^2+\beta+2=0$. Consider a one-to-one mapping function from $\mathbb{F}_5$ to $\mathbb{Z}_5$, defined by $\phi(x)=x$ for $x \in \mathbb{F}_5$. Choose  
			$$\Psi=\begin{bmatrix}
				0 0 0 0 0\\
				0 1 2 4 3\\
				0 2 4 3 1\\
				0 4 3 1 2\\
				0 3 1 2 4
			\end{bmatrix},$$
			where each entry stands for a power of $\xi_5$. Then $\mathcal{C}$ is a $(6,5,4,5)$-DRCSS. According to Theorem \ref{theorem2}, we obtain a DRCSS $\mathcal{C}$ are known in Table \ref{table4}, where each entry represents a power of $\xi_5$. By MATLAB problem, we respectively show the auto-ambiguity magnitude distribution of $\mathbf{C}^1$, the auto-ambiguity magnitude distribution of $\mathbf{C}^3$ and the cross-ambiguity magnitude distribution of $\mathbf{C}^1$ and $\mathbf{C}^3$ in Fig. \ref{Fig2}.
		\end{example}

		\subsection{The third construction of aperiodic DRCSSs}
		Let $\phi(\cdot)$ be an arbitrary one-to-one mapping from $\mathbb{F}_q$ to $\mathbb{Z}_q$. 
		Let $\Psi=[\psi_i^j]_{q\times q}$ be the column orthogonal matrix defined as above, where $i$ denotes the row index and $j$ denotes the column index for $0 \leq i,j \leq q-1$.
		We define a DRCS set
		\begin{equation}\label{equation3.3}
			\mathcal{C}=\{\mathbf{C}^{k}: 0\leq k \leq q-2\},
		\end{equation}
		where $\mathbf{C}^{k}=[\mathbf{c}_0^{k}, \mathbf{c}_1^{k}, \cdots ,\mathbf{c}_{q-1}^{k}]^{\mathrm{T}}$ consist of $K=q-1$ sequences of length $N=q+1$ defined by
		$$\mathbf{c}_m^{k}=\{c_m^{k}(t)\}_{t=0}^{q},$$ 
		$$c_m^{k}(t)=\psi_m^{\phi(\mathrm{Tr}_n^{2n}(\beta^{k(q+1)+t}))}, \;\; 0\leq m< q.$$
		It is straightforward to show that the alphabet size of $\mathcal{C}$ is $p$.

		\begin{theorem}\label{theorem3}
			Let $q=p^n>3$, where $p$ is a prime and $n$ is a positive integer. Then $\mathcal{C}$ is an aperiodic $(q-1,q,q+1,q)$-DRCSS with alphabet size $p$ which is asymptotically optimal with respect to the lower bound in \eqref{eqD1}.
		\end{theorem}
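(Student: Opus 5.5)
We follow the template of the proof of Theorem \ref{theorem1}. We compute $\widehat{AF}_{\mathbf{C}^{k_1},\mathbf{C}^{k_2}}(\tau,v)$ for $0\le k_1,k_2\le q-2$ and $0\le \tau\le q$; the case $\tau<0$ is symmetric after exchanging the roles of the two sequences. Interchanging the sums over $m$ and $t$ gives
$$\widehat{AF}_{\mathbf{C}^{k_1},\mathbf{C}^{k_2}}(\tau,v)=\sum_{t=0}^{q-\tau}\xi_{q+1}^{tv}\sum_{m=0}^{q-1}\psi_m^{\phi(x_t)}\bigl(\psi_m^{\phi(y_t)}\bigr)^*,$$
with $x_t=\mathrm{Tr}_n^{2n}(\beta^{k_1(q+1)+t})$ and $y_t=\mathrm{Tr}_n^{2n}(\beta^{k_2(q+1)+t+\tau})$. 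Since $\phi$ is injective and the columns of $\Psi$ are pairwise orthogonal with unimodular entries, the inner sum equals $q$ when $x_t=y_t$ and $0$ otherwise. Writing $\delta=(k_2-k_1)(q+1)+\tau$, the condition $x_t=y_t$ is equivalent to
$$\mathrm{Tr}_n^{2n}\bigl(\beta^{k_1(q+1)+t}(1-\beta^{\delta})\bigr)=0.$$

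\textbf{Auto-AF ($k_1=k_2$, $\delta=\tau$).} For $\tau=0$ every term equals $q$, so $\widehat{AF}_{\mathbf{C}^{k}}(0,v)=q\sum_{t=0}^{q}\xi_{q+1}^{tv}$. This is $q(q+1)$ if $v=0$ and $0$ otherwise, because the Doppler factor uses $\xi_N$ with $N=q+1$.

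\textbf{Nonzero $\delta$.} The key step is that $\delta\not\equiv 0 \pmod{q^2-1}$ whenever $(k_1,\tau)\neq(k_2,0)$. Indeed, $|k_2-k_1|\le q-2$ and $|\tau|\le q$ give
$$|\delta|\le (q-2)(q+1)+q=q^2-2<q^2-1,$$
so $\delta\equiv0$ forces $\delta=0$. Then $q+1$ divides $\tau$, so $\tau=0$ and hence $k_1=k_2$. Unlike in Theorem \ref{theorem1}, no exceptional pair such as $(\tau,k_2-k_1)=(q-1,q)$ occurs.

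\textbf{Counting solutions.} When $1-\beta^{\delta}\neq0$, write $1-\beta^{\delta}=\beta^{d}$. The equation becomes $s(k_1(q+1)+t+d)=0$ for the $m$-sequence $s$ with $r=2$. By Lemma \ref{lemmaD1}, its zeros occur exactly once in every $q+1$ consecutive indices. The range $0\le t\le q-\tau$ has at most $q+1$ consecutive values, so there is at most one solution $t'$. Hence $\widehat{AF}$ is either $0$ or $q\xi_{q+1}^{t'v}$, and in particular $|\widehat{AF}|\in\{0,q\}$ for all $(\tau,v)\neq(0,0)$ in the auto case and for all $(\tau,v)$ in the cross case. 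It follows that $\hat\theta_{\max}=q$ with $K=q-1$, $M=q$, $N=q+1$, and $\mathrm{II}$ the full region, i.e.\ $Z_x=Z_y=q+1$.

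\textbf{Asymptotic optimality.} We first verify the hypotheses of \eqref{eqD1}. The condition $K>3M/Z_y$ reads $q-1>3q/(q+1)$, which is equivalent to $q^2-3q-1>0$ and holds for $q\ge4$. The condition $N\sqrt{3M/(KZ_y)}\le Z_x$ reads $3q\le (q-1)(q+1)$, which also holds for $q\ge4$; this is where $q>3$ is used. The bound then gives
$$\hat\theta_{\rm opt}=\sqrt{q(q+1)\Bigl(1-2\sqrt{\tfrac{q}{3(q-1)(q+1)}}\Bigr)},$$
and $\hat\rho=q/\hat\theta_{\rm opt}\to1$ as $q\to\infty$.

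The step requiring the most care is excluding $\delta\equiv0\pmod{q^2-1}$ in the cross case, together with checking that the aperiodic window never covers two zeros of the shifted $m$-sequence. Both follow from the size estimates above.
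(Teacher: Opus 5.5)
Your proof is correct and follows the route the paper intends; the paper omits this proof and defers to Theorem~\ref{theorem1}. Column orthogonality of $\Psi$ and injectivity of $\phi$ make each inner sum equal to $q$ or $0$, and Lemma~\ref{lemmaD1} allows at most one zero of the shifted $m$-sequence in any window of at most $q+1$ consecutive indices. Your estimate $|\delta|\le q^2-2$, which rules out the exceptional coincidence of Theorem~\ref{theorem1}, and your check of the hypotheses of \eqref{eqD1}, where $q>3$ is genuinely needed, are details the paper leaves implicit.

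One small point: your argument gives $\hat{\theta}_{\max}\le q$, and writing $\hat{\theta}_{\max}=q$ needs the value $q$ to be attained. It is: for $k_1\neq k_2$ and $\tau=0$, the window $0\le t\le q$ is a complete residue system modulo $q+1$, so it contains exactly one zero and $|\widehat{AF}_{\mathbf{C}^{k_1},\mathbf{C}^{k_2}}(0,v)|=q$.
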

		
		\begin{proof}
		The proof is very similar to that Theorem \ref{theorem1}. We omit it here.
		\end{proof}
		
		\begin{table}[htbp]  
			\centering 
			\caption{The parameters of some aperiodic DRCSSs in Theorem \ref{theorem3} for $q=p$ and $5\leq q \leq 43$} 
			\label{table5} 
			\begin{tabular}{ccccccc} 
				\toprule 
				q     & K & M & N &$\hat{\theta}_\text{max}$  & $\hat{\theta}_{\text{opt}}$ & $\hat{\rho}$   \\
				\midrule 
				5     & 4 & 5 & 6 &5         & 3.7668 & 1.3274 \\
				7     & 6 & 7 & 8 &7         & 5.5952 & 1.2511 \\
				11    & 10 & 11 & 12 &11     & 9.2657 & 1.1872 \\
				13    & 12 & 13 & 14 &13     & 11.1149 & 1.1696\\
				17    & 16 & 17 & 18 &17     & 14.8376 & 1.1457 \\
				19    & 18 & 19 & 20 &19     & 16.7092 & 1.1371\\
				23    & 22 & 23 & 24 &23     & 20.4687 & 1.1237 \\
				29    & 28 & 29 & 30 &29     & 26.1408 & 1.1094 \\
				31    & 30 & 31 & 32 &31     & 28.0386 & 1.1056 \\
				37    & 36 & 37 & 38 &37     & 33.7491 & 1.0963\\
				41    & 40 & 41 & 42 &41     & 37.5682 & 1.0913 \\
				43    & 42 & 43 & 44 &43     & 39.4810 & 1.0891 \\
				\bottomrule 
			\end{tabular}
		\end{table}
		
		\begin{remark}
			In Table \ref{table5}, we list the parameters of some aperiodic DRCSSs $\mathcal{C}$ constructed in Theorem \ref{theorem3}. The numerical date show that the optimal factor of $\mathcal{C}$ approaches $1$ quickly when $q$ increases.
		\end{remark}
		
		\begin{remark}
			Note that the DRCSS in Theorem \ref{theorem3} also depends on the column orthogonal matrix $\Psi$. Similarly to the discussions in Remark \ref{remark}, there also exist known suitable matrices $\Psi$ such that the column sequence PAPR of each complementary matrix in the aperiodic DRCSSs $\mathcal{C}$ is not larger than $p$. We omit the details here.
		\end{remark}

		\begin{table}[htbp]
			\centering
			\caption{The DRCS Set $\mathcal{C}$ in Example \ref{example3}}
			\label{table6}
			\begin{tabular}{|c|c|c|c|}
				\hline
				$\mathbf{C}^{0}$ & $\mathbf{C}^{1}$  & $\mathbf{C}^{2}$ & $\mathbf{C}^{3}$ \\
				\hline
				$\begin{bmatrix}          
					0 0 0 0 0 0  \\
					2 3 2 0 1 3 \\
					4 1 4 0 2 1 \\
					3 2 3 0 4 2 \\
					1 4 1 0 3 4
				\end{bmatrix}$
				&
				$\begin{bmatrix}
					0 0 0 0 0 0  \\
					3 4 3 0 2 4 \\
					1 3 1 0 4 3\\
					2 1 2 0 3 1\\
					4 2 4 0 1 2
				\end{bmatrix}$
				&
				$\begin{bmatrix}
					0 0 0 0 0 0  \\
					4 1 4 0 3 1 \\
					3 2 3 0 1 2\\
					1 4 1 0 2 4\\
					2 3 2 0 4 3
				\end{bmatrix}$ 
				&
				$\begin{bmatrix}
					0 0 0 0 0 0  \\
					1 2 1 0 4 2 \\
					2 4 2 0 3 4\\
					4 3 4 0 1 3\\
					3 1 3 0 2 1 
				\end{bmatrix}$\\
		\hline
			\end{tabular}
		\end{table}
		
		\begin{figure}[!htbp]  
			\centering
			\includegraphics[width=0.9\linewidth]{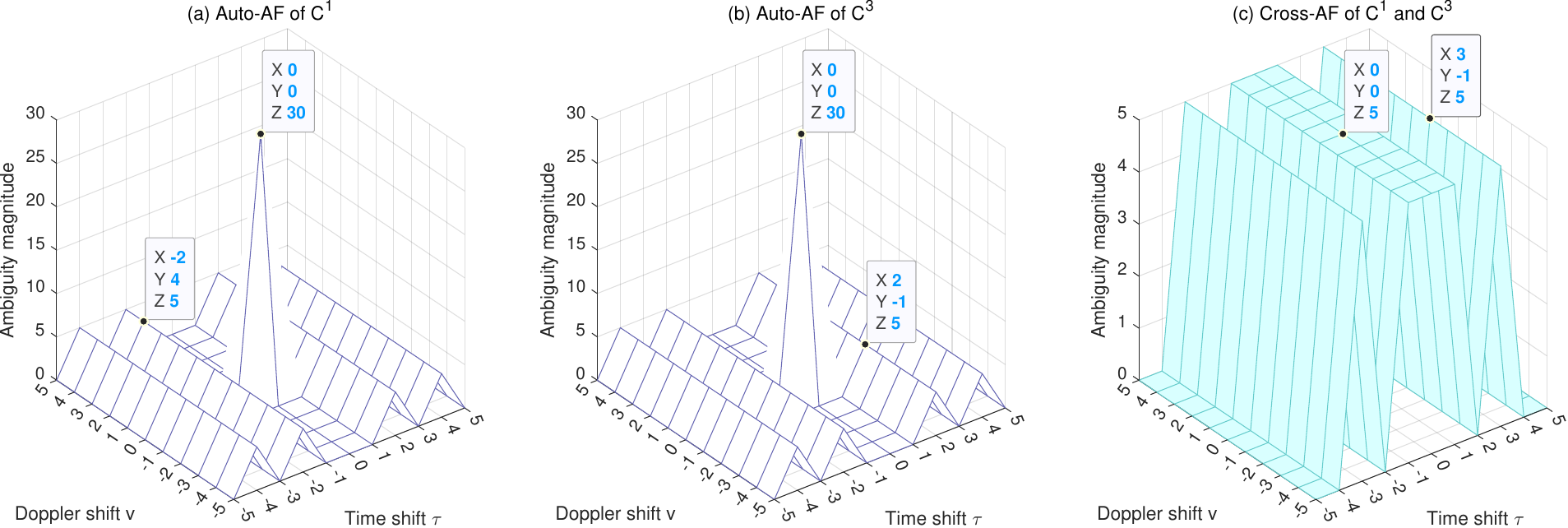}
			\caption{The aperiodic auto-ambiguity and cross-ambiguity magnitude of $\mathbf{C}^{1}$ and $\mathbf{C}^{3}$ in Example \ref{example3}}
			\label{Fig3}  
		\end{figure}
		
		\begin{example}\label{example3}
			Let $q=5$ and $\beta$ be a primitive element of $\mathbb{F}_{5^2}$ satisfying $\beta^2+\beta+2=0$. Consider a one-to-one mapping function from $\mathbb{F}_5$ to $\mathbb{Z}_5$, defined by $\phi(x)=x$ for $x \in \mathbb{F}_5$. Choose  
			$$\Psi=\begin{bmatrix}
				0 0 0 0 0\\
				0 1 2 4 3\\
				0 2 4 3 1\\
				0 4 3 1 2\\
				0 3 1 2 4
			\end{bmatrix},$$
			where each entry stands for a power of $\xi_5$. Then $\mathcal{C}$ is a $(4,5,6,5)$-DRCSS. According to Theorem \ref{theorem3}, we obtain a DRCSS $\mathcal{C}$ are known in Table \ref{table6}, where each entry represents a power of $\xi_5$. By MATLAB problem, we respectively show the auto-ambiguity magnitude distribution of $\mathbf{C}^1$, the auto-ambiguity magnitude distribution of $\mathbf{C}^3$ and the cross-ambiguity magnitude distribution of $\mathbf{C}^1$ and $\mathbf{C}^3$ in Fig. \ref{Fig3}.
		\end{example}

		\subsection{The fourth construction of aperiodic DRCSSs}
		Let $0 \leq e \leq q$ be an integer satisfying $\mathrm{Tr}_n^{2n}(\beta^e)=0$, where $\beta$ is a primitive element of $\mathbb{F}_{q^2}$.
		Let $\phi(\cdot)$ be an arbitrary one-to-one mapping from $\mathbb{F}_q$ to $\mathbb{Z}_q$. 
		We define a DRCS set
		\begin{equation}\label{equation3.4}
			\mathcal{C}=\{\mathbf{C}^{k}: 0\leq k \leq q-2\},
		\end{equation}
		where $\mathbf{C}^{k}=[\mathbf{c}_0^{k}, \mathbf{c}_1^{k}, \cdots ,\mathbf{c}_{q-1}^{k}]^{\mathrm{T}}$ consist of $K=q-1$ sequences of length $N=q$ defined by
		$$\mathbf{c}_m^{k}=\{c_m^{k}(t)\}_{t=0}^{q-1},$$ 
		$$c_m^{k}(t)=\xi_{q-1}^{m\phi(\mathrm{Tr}_n^{2n}(\beta^{e+k(q+1)+t+1}))}, \;\; 0\leq m\leq q-2.$$
		It is straightforward to show that the alphabet size of $\mathcal{C}$ is $q-1$.

		\begin{theorem}\label{theorem4}
			Let $q=p^n>3$, where $p$ is a prime and $n$ is a positive integer. Then $\mathcal{C}$ is an aperiodic $(q-1,q-1,q,q-1)$-DRCSS with alphabet size $q-1$ which is asymptotically optimal with respect to the lower bound in \eqref{eqD1}.
		\end{theorem}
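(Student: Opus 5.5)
The plan is to follow the template of the proof of Theorem \ref{theorem1}. The new ingredient is that the window of exponents $e+k(q+1)+1,\dots,e+k(q+1)+q$ lies strictly between two consecutive zeros of the $m$-sequence. By Lemma \ref{lemmaD1}, the zeros of $s(t)=\mathrm{Tr}_n^{2n}(\beta^t)$ occur exactly at $t\equiv e\pmod{q+1}$. Hence every entry $\mathrm{Tr}_n^{2n}(\beta^{e+k(q+1)+t+1})$, $0\le t\le q-1$, lies in $\mathbb{F}_q^*$.

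To make the $(q-1)$-th roots of unity behave like a column orthogonal matrix, I need the values $\phi(x)$, $x\in\mathbb{F}_q^*$, to be pairwise distinct modulo $q-1$. I would take $\phi$ on $\mathbb{F}_q^*$ to be the discrete logarithm $x=\alpha^{\phi(x)}$; presumably this is what "one-to-one" is meant to secure, and I would state the assumption explicitly. Then
$$\sum_{m=0}^{q-2}\xi_{q-1}^{m(\phi(a)-\phi(b))}=\begin{cases}q-1,& a=b,\\ 0,& a\neq b,\end{cases}\qquad a,b\in\mathbb{F}_q^*.$$

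\textbf{Reducing the ambiguity function to a zero count.} For $0\le k_1,k_2\le q-2$ and $0\le\tau\le q-1$, exchanging the two sums as in Theorem \ref{theorem1} gives
$$\widehat{AF}_{\mathbf{C}^{k_1},\mathbf{C}^{k_2}}(\tau,v)=(q-1)\sum_{t\in T}\xi_q^{tv},$$
where $T$ is the set of $0\le t\le q-1-\tau$ with $\mathrm{Tr}_n^{2n}(\beta^{s_t}(1-\beta^{d}))=0$. Here $s_t=e+k_1(q+1)+t+1$ and $d=(k_2-k_1)(q+1)+\tau$. Negative $\tau$ is handled symmetrically, using $\widehat{AF}_{\mathbf{a},\mathbf{b}}(-\tau,v)$ expressed through $\widehat{AF}_{\mathbf{b},\mathbf{a}}(\tau,\cdot)$ up to a unimodular factor.

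\textbf{Case $\beta^d=1$.} This means $q^2-1\mid d$. Reducing modulo $q+1$ gives $q+1\mid\tau$, which forces $\tau=0$. Then $q-1\mid k_2-k_1$ forces $k_1=k_2$. So this case occurs only for the auto-AF at $\tau=0$, where the value is $(q-1)\sum_{t=0}^{q-1}\xi_q^{tv}$. This equals $q(q-1)$ for $v=0$ and $0$ for $v\neq0$, so it contributes nothing off the origin.

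\textbf{Case $\beta^d\neq1$ (main step).} Write $1-\beta^d=\beta^{c}$. The condition becomes $s(s_t+c)=0$. The indices $s_t+c$ range over at most $q$ consecutive integers, and zeros of $s$ are spaced exactly $q+1$ apart by Lemma \ref{lemmaD1}. Hence $|T|\le1$, and $|\widehat{AF}|\in\{0,q-1\}$. This holds both for auto-AF with $(\tau,v)\ne(0,0)$ and for all cross-AF, so $\hat\theta_{\max}=q-1$. I expect this step to be the main obstacle, mainly in carefully justifying that $\beta^d=1$ only in the trivial case, and in the $\phi$-injectivity subtlety noted above.

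\textbf{Asymptotic optimality.} Finally I plug $K=M=q-1$ and $N=Z_x=Z_y=q$ into \eqref{eqD1}. The hypotheses hold: $K>3M/Z_y$ is $q>3$, and $N\sqrt{3M/(KZ_y)}=\sqrt{3q}\le q$ for $q>3$. This gives
$$\hat\theta_{\rm opt}=\sqrt{q(q-1)\Bigl(1-\tfrac{2}{\sqrt{3q}}\Bigr)}.$$
Therefore
$$\hat\rho=\frac{q-1}{\sqrt{q(q-1)\bigl(1-2/\sqrt{3q}\bigr)}}\to1 \quad\text{as } q\to\infty.$$
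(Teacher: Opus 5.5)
Your proposal is correct and follows the paper's proof essentially step by step: interchange the sums, reduce to counting $t$ with $\mathrm{Tr}_n^{2n}(\beta^{s_t}(1-\beta^{d}))=0$, invoke the $(q+1)$-spacing of zeros from Lemma~\ref{lemmaD1}, and substitute $K=M=q-1$, $N=Z_x=Z_y=q$ into \eqref{eqD1}; your handling of $\beta^d=1$, of negative $\tau$, and of the hypotheses of \eqref{eqD1} supplies details the paper calls ``clear'' or omits. Your extra condition on $\phi$ is a genuine correction, not a weakness: mere injectivity of $\phi$ does not make $\sum_{m=0}^{q-2}\xi_{q-1}^{m(\phi(a)-\phi(b))}$ vanish when $\phi(a)-\phi(b)=\pm(q-1)$, so one needs $\phi$ injective modulo $q-1$ on $\mathbb{F}_q^*$ (equivalently $\phi(0)\in\{0,q-1\}$) together with your observation that all entries avoid the zeros of the trace. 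For instance, take $q=5$, $\beta$ and $e=3$ as in Example~\ref{example4}, but with $\phi(0)=1$, $\phi(1)=0$ and $\phi(x)=x$ otherwise; then $|\widehat{AF}_{\mathbf{C}^{0}}(1,0)|=8>q-1$.
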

		
		\begin{proof}
			According to the definition of DRCSS, we divide the proof into two cases:
			
			\textbf{Case 1: (auto-AF)} For any $0\leq k \leq q-2$ , we have
			$$
			\begin{aligned}
				\widehat{AF}_{\mathbf{C}^{k}}(\tau,v)
				&= \sum_{m=0}^{q-2} \widehat{AF}_{\mathbf{c}_m^{k}}(\tau,v) \\
				&= \sum_{m=0}^{q-2} \sum_{t=0}^{q-1-\tau} c_m^{k}(t) (c_m^{k}(t+\tau))^* \xi_{q}^{tv}  \\
				&= \sum_{m=0}^{q-2} \sum_{t=0}^{q-1-\tau} 
				\xi_{q-1}^{m\phi(\mathrm{Tr}_n^{2n}(\beta^{e+k(q+1)+t+1}))} (\xi_{q-1}^{m\phi(\mathrm{Tr}_n^{2n}(\beta^{e+k(q+1)+t+1+\tau}))})^* \xi_{q}^{tv}\\
				&= \sum_{t=0}^{q-1-\tau} \xi_{q}^{tv} \sum_{m=0}^{q-2}  
				\xi_{q-1}^{m(\phi(\mathrm{Tr}_n^{2n}(\beta^{e+k(q+1)+t+1}))-\phi(\mathrm{Tr}_n^{2n}(\beta^{e+k(q+1)+t+1+\tau})))} .
			\end{aligned}
			$$
			
			Since $\phi(\cdot)$ is a one-to-one mapping from $\mathbb{F}_q$ to $\mathbb{Z}_q$, then $\phi(\mathrm{Tr}_n^{2n}(\beta^{e+k(q+1)+t+1}))=\phi(\mathrm{Tr}_n^{2n}(\beta^{e+k(q+1)+t+1+\tau}))$ if and only if $\mathrm{Tr}_n^{2n}(\beta^{e+k(q+1)+t+1})=\mathrm{Tr}_n^{2n}(\beta^{e+k(q+1)+t+1+\tau})$. Thus we only need to consider the $\mathrm{Tr}_n^{2n}(\beta^{e+k(q+1)+t+1}(1-\beta^{\tau}))=0$.

			When $\tau=0$, we have 
			$$\widehat{AF}_{\mathbf{C}^{k}}(0,v)=(q-1)\sum_{t=0}^{q-1} \xi_{q}^{tv}=\begin{cases}
				q^2-q, &\text{ if } v=0,\\
				0, &\text{ if } v \neq 0.
			\end{cases}$$
			
			When $\tau \neq 0$, by Lemma \ref{lemmaD1}, we deduce that $\mathrm{Tr}_n^{2n}(\beta^{e+k(q+1)+t+1}(1-\beta^{\tau}))=0$ with variable $t$ has at most one solution if $0 \leq t \leq q-1-\tau$.
			
			If $\mathrm{Tr}_n^{2n}(\beta^{e+k(q+1)+t+1}(1-\beta^{\tau})) \neq 0$ for $0 \leq t \leq q-1-\tau$, then 
			$$\widehat{AF}_{\mathbf{C}^{k}}(\tau,v)=0.$$
			
			If there exists a unique solution $0\leq t^{\prime} \leq q-1-\tau$ such that  $\mathrm{Tr}_n^{2n}(\beta^{e+k(q+1)+t^{\prime}+1}(1-\beta^{\tau})) = 0$, then 
			$$\begin{aligned}
				\widehat{AF}_{\mathbf{C}^{k}}(\tau,v)=&(q-1) \xi_q^{t^{\prime}v}+\sum_{t=0,t\neq t^{\prime}}^{q-1-\tau}\xi_q^{tv} \sum_{m=0}^{q-2} \xi_{q-1}^{m(\phi(\mathrm{Tr}_n^{2n}(\beta^{e+k(q+1)+t+1}))-\phi(\mathrm{Tr}_n^{2n}(\beta^{k(q+1)+t+\tau})))} \\
				=&(q-1) \xi_q^{t^{\prime}v}.
			\end{aligned}$$

			\textbf{Case 2:(cross-AF)} For any $0 \leq k_1, k_2 \leq q-2$ and $k_1 \neq k_2$, we have
			$$
			\begin{aligned}
				\widehat{AF}_{\mathbf{C}^{k_1},\mathbf{C}^{k_2}}(\tau,v)
				&= \sum_{m=0}^{q-2} \widehat{AF}_{\mathbf{c}_m^{k_1},\mathbf{c}_m^{k_2}}(\tau,v) \\
				&= \sum_{m=0}^{q-2} \sum_{t=0}^{q-1-\tau} c_m^{k_1}(t) (c_m^{k_2}(t+\tau))^* \xi_{q}^{tv}  \\
				&= \sum_{m=0}^{q-2} \sum_{t=0}^{q-1-\tau} \xi_{q-1}^{m\phi(\mathrm{Tr}_n^{2n}(\beta^{e+k_1(q+1)+t+1}))} (\xi_{q-1}^{m\phi(\mathrm{Tr}_n^{2n}(\beta^{e+k_2(q+1)+t+1+\tau}))})^* \xi_{q}^{tv}\\
				&= \sum_{t=0}^{q-1-\tau} \xi_{q}^{tv} \sum_{m=0}^{q-2} \xi_{q-1}^{m(\phi(\mathrm{Tr}_n^{2n}(\beta^{k_1(q+1)+t}))-\phi(\mathrm{Tr}_n^{2n}(\beta^{k_2(q+1)+t+\tau})))}.
			\end{aligned}
			$$
			
			Since $\phi(\cdot)$ is a one-to-one mapping from $\mathbb{F}_q$ to $\mathbb{Z}_q$, then $\phi(\mathrm{Tr}_n^{2n}(\beta^{e+k_1(q+1)+t+1}))=\phi(\mathrm{Tr}_n^{2n}(\beta^{e+k_2(q+1)+t+1+\tau}))$ if and only if $\mathrm{Tr}_n^{2n}(\beta^{e+k_1(q+1)+t+1})=\mathrm{Tr}_n^{2n}(\beta^{e+k_2(q+1)+t+1+\tau})$. Thus we only need to consider the $\mathrm{Tr}_n^{2n}(\beta^{e+k_1(q+1)+t+1}(1-\beta^{(k_2-k_1)(q+1)+\tau}))=0$.
			
			It is clear that $(k_2-k_1)(q+1)+\tau \not \equiv0 \pmod{q^2-1}$. By Lemma \ref{lemmaD1}, we deduce that $\mathrm{Tr}_n^{2n}(\beta^{e+k_1(q+1)+t+1}(1-\beta^{(k_2-k_1)(q+1)+\tau}))=0$ has at most one solution for $0 \leq t \leq q-1-\tau$. Similarly to Case 1 above, we also have 
			$$|\widehat{AF}_{\mathbf{C}^{k_1},\mathbf{C}^{k_2}}(\tau,v)| \in \{0,q-1\}.$$
			
			Summarizing above cases, we deduce that the maximum aperiodic ambiguity magnitude of $\mathcal{C}$ is $q-1$. Then $\mathcal{C}$ is an aperiodic $(q-1,q-1,q,q-1)$-DRCSS.
			
			According to the lower bound in \eqref{eqD1}, we have $$\hat{\theta}_{\text{opt}}=\sqrt{q(q-1)(1-2\sqrt{\frac{q-1}{3(q-1)q}})}=\sqrt{q(q-1)}\sqrt{1-2\sqrt{\frac{1}{3q}}}.$$
			It is easy to see that 
			$$\displaystyle\lim_{q\to \infty} \hat{\rho}
			=\displaystyle\lim_{q\to\infty} \frac{q-1}{\sqrt{q(q-1)}\sqrt{1-2\sqrt{\frac{1}{3q}}}}=1.$$
			Thus the aperiodic DRCSS $\mathcal{C}$ is asymptotically optimal.
		\end{proof}
		
		\begin{table}[htbp]  
			\centering 
			\caption{The parameters of some aperiodic DRCSSs in Theorem \ref{theorem4} for $q=p$ and $5\leq q \leq 43$} 
			\label{table7} 
			\begin{tabular}{ccccccc} 
				\toprule 
				q     & K & M & N &$\hat{\theta}_\text{max}$  & $\hat{\theta}_{\text{opt}}$ & $\hat{\rho}$   \\
				\midrule 
		5     & 4 & 4 & 5 &4         & 3.1100 & 1.2862 \\
		7     & 6 & 6 & 7 &6         & 4.8651 & 1.2332 \\
		11    & 10 & 10 & 11 &10     & 8.4678 & 1.1810 \\
		13    & 12 & 12 & 13 &12     & 10.2976 & 1.1653\\
		17    & 16 & 16 & 17 &16     & 13.9937 & 1.1433 \\
		19    & 18 & 18 & 19 &18     & 15.8557 & 1.1352\\
		23    & 22 & 22 & 23 &22     & 19.6002 & 1.1224 \\
		29    & 28 & 28 & 29 &28     & 25.2565 & 1.1086 \\
		31    & 30 & 30 & 31 &30     & 27.1501 & 1.1050 \\
		37    & 36 & 36 & 37 &36     & 32.8503 & 1.0959\\
		41    & 40 & 40 & 41 &40     & 36.6640 & 1.0910 \\
		43    & 42 & 42 & 43 &42     & 38.5744 & 1.0888 \\
				\bottomrule 
			\end{tabular}
		\end{table}
		
		\begin{remark}
			In Table \ref{table7}, we list the parameters of some aperiodic DRCSSs $\mathcal{C}$ constructed in Theorem \ref{theorem4}. The numerical date show that the optimal factor of $\mathcal{C}$ approaches $1$ quickly when $q$ increases.
		\end{remark}

		\begin{table}[htbp]
			\centering
			\caption{The DRCS Set $\mathcal{C}$ in Example \ref{example4}}
			\label{table8}
			\begin{tabular}{|c|c|c|c|}
				\hline
				$\mathbf{C}^{0}$ & $\mathbf{C}^{1}$  & $\mathbf{C}^{2}$ & $\mathbf{C}^{3}$ \\
				\hline
				$\begin{bmatrix}          
					0 0 0 0 0   \\
					1 0 0 3 0  \\
					2 0 0 2 0  \\				
					3 0 0 1 0 
				\end{bmatrix}$
				&
				$\begin{bmatrix}
					0 0 0 0 0   \\
					2 3 3 1 3  \\
					0 2 2 2 2 \\
					2 1 1 3 1 
				\end{bmatrix}$
				&
				$\begin{bmatrix}
					0 0 0 0 0  \\
					0 1 1 2 1  \\
					0 2 2 0 2 \\
					0 3 3 2 3 
				\end{bmatrix}$ 
				&
				$\begin{bmatrix}
					0 0 0 0 0   \\
					3 2 2 0 2  \\
					2 0 0 0 0 \\
					1 2 2 0 2  
				\end{bmatrix}$\\
				\hline
			\end{tabular}
		\end{table}
		
		\begin{figure}[!htbp]  
			\centering
			\includegraphics[width=0.9\linewidth]{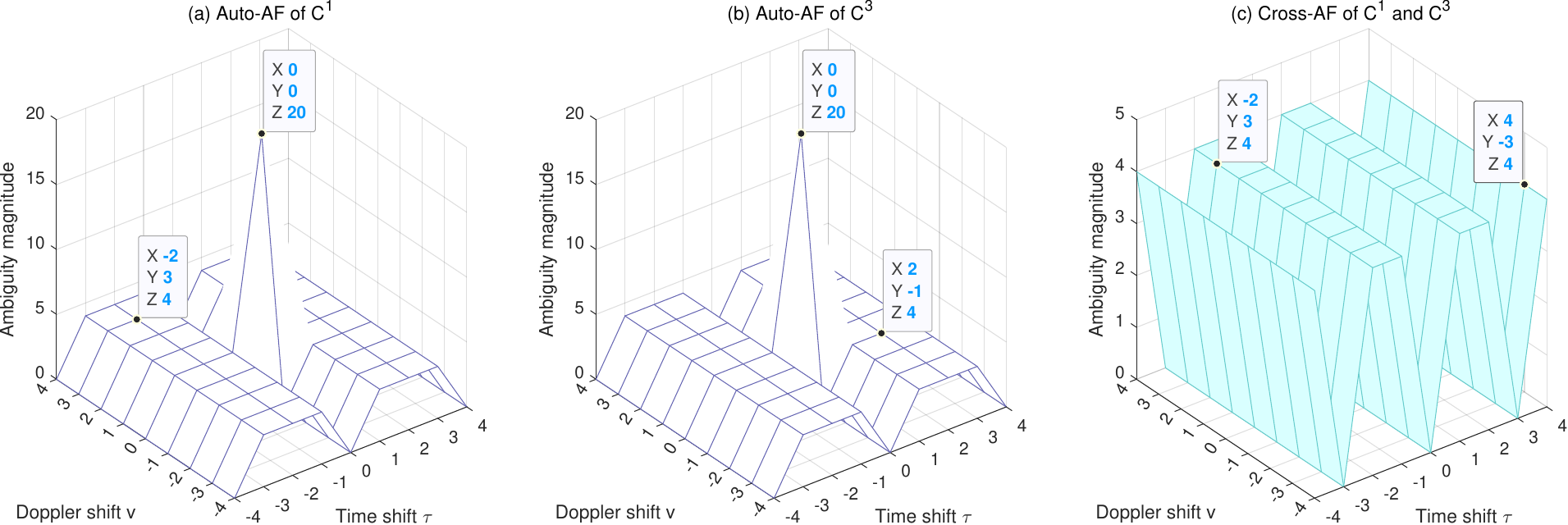}
			\caption{The aperiodic auto-ambiguity and cross-ambiguity magnitude of $\mathbf{C}^{1}$ and $\mathbf{C}^{3}$ in Example \ref{example4}}
			\label{Fig4}  
		\end{figure}
		
		\begin{example}\label{example4}
			Let $q=5$ and $\beta$ be a primitive element of $\mathbb{F}_{5^2}$ satisfying $\beta^2+\beta+2=0$. Consider a one-to-one mapping function from $\mathbb{F}_5$ to $\mathbb{Z}_5$, defined by $\phi(x)=x$ for $x \in \mathbb{F}_5$. Let $e=3$ satisfying $\mathrm{Tr}_n^{2n}(\beta^3)=0$ Then $\mathcal{C}$ is a $(4,4,5,4)$-DRCSS. According to Theorem \ref{theorem4}, we obtain a DRCSS $\mathcal{C}$ are known in Table \ref{table8}, where each entry represents a power of $\xi_4$. By MATLAB problem, we respectively show the auto-ambiguity magnitude distribution of $\mathbf{C}^1$, the auto-ambiguity magnitude distribution of $\mathbf{C}^3$ and the cross-ambiguity magnitude distribution of $\mathbf{C}^1$ and $\mathbf{C}^3$ in Fig. \ref{Fig4}.
		\end{example}
		
			\subsection{The fifth construction of aperiodic DRCSSs}
		Let $0 \leq e \leq q$ be an integer satisfying $\mathrm{Tr}_n^{2n}(\beta^e)=0$, where $\beta$ is a primitive element of $\mathbb{F}_{q^2}$.
		Let $\phi(\cdot)$ be an arbitrary one-to-one mapping from $\mathbb{F}_q$ to $\mathbb{Z}_q$. 
		We define a DRCS set
		\begin{equation}\label{equation3.5}
			\mathcal{C}=\{\mathbf{C}^{k}: 0\leq k \leq q-2\},
		\end{equation}
		where $\mathbf{C}^{k}=[\mathbf{c}_0^{k}, \mathbf{c}_1^{k}, \cdots ,\mathbf{c}_{q-1}^{k}]^{\mathrm{T}}$ consist of $K=q-1$ sequences of length $N=q-1$ defined by
		$$\mathbf{c}_m^{k}=\{c_m^{k}(t)\}_{t=0}^{q-2},$$ 
		$$c_m^{k}(t)=\xi_{q-1}^{m\phi(\mathrm{Tr}_n^{2n}(\beta^{e+k(q+1)+t+1}))}, \;\; 0\leq m\leq q-2.$$
		It is straightforward to show that the alphabet size of $\mathcal{C}$ is $q-1$.

		\begin{theorem}\label{theorem5}
			Let $q=p^n>4$, where $p$ is a prime and $n$ is a positive integer. Then $\mathcal{C}$ is an aperiodic $(q-1,q-1,q-1,q-1)$-DRCSS with alphabet size $q-1$ which is asymptotically optimal with respect to the lower bound in \eqref{eqD1}.
		\end{theorem}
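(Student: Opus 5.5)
The plan is to repeat the proof of Theorem \ref{theorem4} essentially word for word, since the fifth construction only truncates each sequence to length $N=q-1$. The flock size is $M=q-1$, with rows $m=0,\dots,q-2$. The Doppler factor is $\xi_{q-1}^{tv}$ and $t$ ranges over $0\le t\le q-2-\tau$. The central identity is that $\sum_{m=0}^{q-2}\xi_{q-1}^{m(a-b)}$ equals $q-1$ when $a=b$ and $0$ otherwise, for $a,b\in\{\phi(x)\}$.

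\textbf{Step 0: check that the sum kills mismatches.} We need $a\not\equiv b \pmod{q-1}$ whenever $a\neq b$. Since $\phi$ maps into $\mathbb{Z}_q$, this can fail when $\{a,b\}=\{0,q-1\}$. I will handle it by showing that one of the two trace values never occurs. Every exponent has the form $e+k(q+1)+t+1$ with $0\le t\le q-2$, so it lies in the residues $e+1,\dots,e+q-1 \pmod{q+1}$ and never hits $e$ or $e+q$. By Lemma \ref{lemmaD1}, $\mathrm{Tr}_n^{2n}(\beta^{s})=0$ if and only if $s\equiv e \pmod{q+1}$, so the value $0$ never appears in any $\mathbf{C}^k$. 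If $\phi^{-1}(0)$ or $\phi^{-1}(q-1)$ is $0\in\mathbb{F}_q$, this removes the aliasing. Otherwise I would state that $\phi$ is assumed compatible in this sense, for instance $\phi(0)\in\{0,q-1\}$, as in Theorem \ref{theorem4}. I expect this compatibility issue to be the main subtle point, and it is treated the same way in both theorems.

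\textbf{Auto-AF.} For $\tau=0$ the inner sum equals $q-1$ for every $t$. Then $\sum_{t=0}^{q-2}\xi_{q-1}^{tv}$ gives $(q-1)^2$ at $v=0$ and $0$ otherwise. For $0<\tau\le q-2$, a nonzero term requires $\mathrm{Tr}_n^{2n}(\beta^{e+k(q+1)+t+1}(1-\beta^\tau))=0$, and $1-\beta^\tau\neq0$. Writing $1-\beta^\tau=\beta^{u}$, Lemma \ref{lemmaD1} says the solutions $t$ form a single residue class mod $q+1$. Since $t$ runs over fewer than $q+1$ consecutive integers, there is at most one solution. Hence $|\widehat{AF}|\in\{0,q-1\}$. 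Negative $\tau$ follows by symmetry, since $\widehat{AF}_{\mathbf a,\mathbf b}(-\tau,v)$ is a conjugate times a phase of $\widehat{AF}_{\mathbf b,\mathbf a}(\tau,\cdot)$.

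\textbf{Cross-AF.} The relevant condition is $\mathrm{Tr}_n^{2n}(\beta^{e+k_1(q+1)+t+1}(1-\beta^{(k_2-k_1)(q+1)+\tau}))=0$. The exponent $(k_2-k_1)(q+1)+\tau$ is not $0$ mod $q^2-1$: if $\tau=0$, then $0<|k_2-k_1|<q-1$; if $\tau\ne0$, then $|\tau|<q+1$ and $(q+1)\nmid\tau$. So the factor is a nonzero field element, and as above there is at most one solution $t$. Therefore $|\widehat{AF}|\in\{0,q-1\}$ everywhere, and $\hat\theta_{\max}=q-1$.

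\textbf{Optimality.} Take $Z_x=Z_y=N=q-1$ and $K=M=q-1$. The hypotheses of \eqref{eqD1} are $K>3M/Z_y=3$, which needs $q>4$ (this is exactly the assumption $q>4$), and $N\sqrt{3M/(KZ_y)}=\sqrt{3(q-1)}\le q-1$, which holds for $q\ge4$. This gives $\hat\theta_{\rm opt}=(q-1)\sqrt{1-2/\sqrt{3(q-1)}}$. Hence $\hat\rho=1/\sqrt{1-2/\sqrt{3(q-1)}}\to1$, so $\mathcal{C}$ is asymptotically optimal.
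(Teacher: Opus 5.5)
Your route is the paper's, whose proof of this theorem is only ``very similar to Theorem~\ref{theorem4}''. Your auto- and cross-AF analysis is correct and more careful than what is written for Theorem~\ref{theorem4}. This covers at most one solution $t$ because the window has fewer than $q+1$ indices, $(k_2-k_1)(q+1)+\tau\not\equiv 0 \pmod{q^2-1}$, the Doppler root $\xi_{q-1}$, negative $\tau$ via the conjugation identity, and the side conditions of \eqref{eqD1} explaining $q>4$.

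What you should change is the framing of Step~0. The paper does \emph{not} deal with this issue in Theorem~\ref{theorem4}. It argues that $\phi(a)=\phi(b)$ iff $a=b$ and then drops all terms with unequal traces. That tacitly assumes $\sum_{m=0}^{q-2}\xi_{q-1}^{m(\phi(a)-\phi(b))}=0$ for $a\neq b$, which fails exactly when $\{\phi(a),\phi(b)\}=\{0,q-1\}$. So your compatibility condition is not inherited from Theorem~\ref{theorem4}. It is a missing hypothesis, and for an arbitrary bijection $\phi$ the statement is false.

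For a counterexample, take the data of Example~\ref{example5} ($q=5$, $\beta^2+\beta+2=0$, $e=3$) but with $\phi(0)=1$, $\phi(1)=0$, and $\phi(x)=x$ otherwise. The traces at exponents $4,5,6,7$ are $1,4,4,3$, so row $m$ of $\mathbf{C}^0$ is $(1,1,1,\xi_4^{3m})$. Then $\widehat{AF}_{\mathbf{C}^0}(1,0)=4+4+\sum_{m=0}^{3}\xi_4^{-3m}=8>q-1$. The same choice breaks Theorem~\ref{theorem4} as well.

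Your observation that $e+k(q+1)+t+1\not\equiv e \pmod{q+1}$, so the trace value $0$ never occurs, is precisely what makes $\phi(0)\in\{0,q-1\}$ sufficient. Under that condition, $\phi$ restricted to $\mathbb{F}_q^*$, which contains every trace value that occurs, is injective modulo $q-1$. Hence the inner sum is $q-1$ or $0$, as you use it. State this as an explicit amendment to the theorem rather than as a hedge; Example~\ref{example5}, with $\phi(x)=x$, satisfies it. With that hypothesis your argument is a complete proof.
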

		
		\begin{proof}
			The proof is very similar to that of Theorem \ref{theorem4}. We omit it here.
		\end{proof}
		
		\begin{table}[htbp]  
			\centering 
			\caption{The parameters of some aperiodic DRCSSs in Theorem \ref{theorem5} for $q=p$ and $5\leq q \leq 43$} 
			\label{table9} 
			\begin{tabular}{ccccccc} 
				\toprule 
				q     & K & M & N &$\hat{\theta}_\text{max}$  & $\hat{\theta}_{\text{opt}}$ & $\hat{\rho}$   \\
				\midrule 
				5     & 4 & 4 & 4 &4         & 2.6005 & 1.5382 \\
				7     & 6 & 6 & 6 &6         & 4.3623 & 1.3754 \\
				11    & 10 & 10 & 10 &10     & 7.9678 & 1.2551 \\
				13    & 12 & 12 & 12 &12     & 9.7980 & 1.2247\\
				17    & 16 & 16 & 16 &16     & 13.4944 & 1.1857 \\
				19    & 18 & 18 & 18 &18     & 15.3563 & 1.1722\\
				23    & 22 & 22 & 22 &22     & 19.1010 & 1.1518 \\
				29    & 28 & 28 & 28 &28     & 24.7572 & 1.1310 \\
				31    & 30 & 30 & 30 &30     & 26.6508 & 1.1257 \\
				37    & 36 & 36 & 36 &36     & 32.3510 & 1.1128\\
				41    & 40 & 40 & 40 &40     & 36.1646 & 1.1061 \\
				43    & 42 & 42 & 42 &42     & 38.0749 & 1.1031 \\
				\bottomrule 
			\end{tabular}
		\end{table}
		
		\begin{remark}
			In Table \ref{table9}, we list the parameters of some aperiodic DRCSSs $\mathcal{C}$ constructed in Theorem \ref{theorem5}. The numerical date show that the optimal factor of $\mathcal{C}$ approaches $1$ quickly when $q$ increases.
		\end{remark}

		\begin{table}[htbp]
			\centering
			\caption{The DRCS Set $\mathcal{C}$ in Example \ref{example5}}
			\label{table10}
			\begin{tabular}{|c|c|c|c|}
				\hline
				$\mathbf{C}^{0}$ & $\mathbf{C}^{1}$  & $\mathbf{C}^{2}$ & $\mathbf{C}^{3}$ \\
				\hline
				$\begin{bmatrix}          
					0 0 0 0    \\
					1 0 0 3   \\
					2 0 0 2   \\				
					3 0 0 1  
				\end{bmatrix}$
				&
				$\begin{bmatrix}
					0 0 0 0   \\
					2 3 3 1   \\
					0 2 2 2  \\
					2 1 1 3  
				\end{bmatrix}$
				&
				$\begin{bmatrix}
					0 0 0 0   \\
					0 1 1 2  \\
					0 2 2 0 \\
					0 3 3 2  
				\end{bmatrix}$ 
				&
				$\begin{bmatrix}
					0 0 0 0    \\
					3 2 2 0   \\
					2 0 0 0  \\
					1 2 2 0   
				\end{bmatrix}$\\
				\hline
			\end{tabular}
		\end{table}
		
		\begin{figure}[!htbp]  
			\centering
			\includegraphics[width=0.9\linewidth]{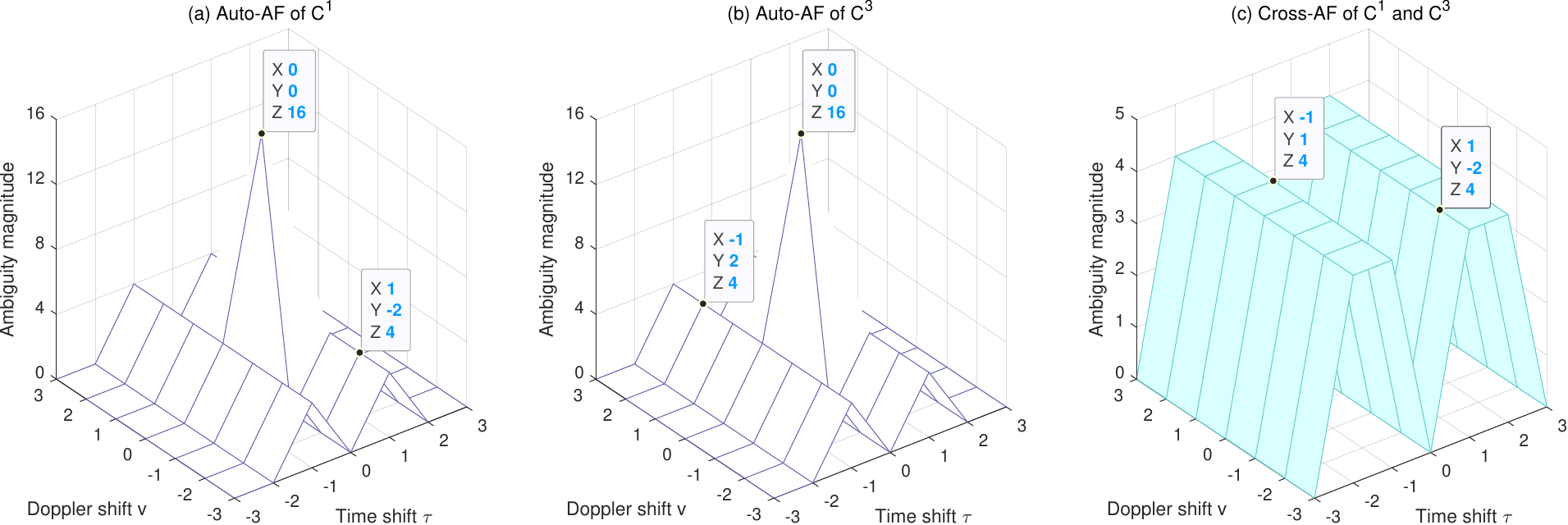}
			\caption{The aperiodic auto-ambiguity and cross-ambiguity magnitude of $\mathbf{C}^{1}$ and $\mathbf{C}^{3}$ in Example \ref{example5}}
			\label{Fig5}  
		\end{figure}
		
		\begin{example}\label{example5}
			Let $q=5$ and $\beta$ be a primitive element of $\mathbb{F}_{5^2}$ satisfying $\beta^2+\beta+2=0$. Consider a one-to-one mapping function from $\mathbb{F}_5$ to $\mathbb{Z}_5$, defined by $\phi(x)=x$ for $x \in \mathbb{F}_5$. Let $e=3$ satisfying $\mathrm{Tr}_n^{2n}(\beta^3)=0$ Then $\mathcal{C}$ is a $(4,4,4,4)$-DRCSS. According to Theorem \ref{theorem5}, we obtain a DRCSS $\mathcal{C}$ are known in Table \ref{table10}, where each entry represents a power of $\xi_4$. By MATLAB problem, we respectively show the auto-ambiguity magnitude distribution of $\mathbf{C}^1$, the auto-ambiguity magnitude distribution of $\mathbf{C}^3$ and the cross-ambiguity magnitude distribution of $\mathbf{C}^1$ and $\mathbf{C}^3$ in Fig. \ref{Fig5}.
		\end{example}
	
	\section{Conclusion}\label{s4}
	In this paper, based on trace functions over finite fields and column orthogonal complex matrices, we present five classes of aperiodic DRCSSs with
	novel parameters, all of which are asymptotically optimal with respect to the bound
	given in \eqref{eqD1}. Compared with the existing asymptotically optimal aperiodic DRCSSs reported in the literature, the proposed aperiodic DRCSSs exhibited superior or novel parameters. Notably, for three constructed families of aperiodic DRCSSs, the PAPR of the column sequences can be upper bounded by $p$ by selecting appropriate column orthogonal complex matrices. An interesting direction for future work is to design DRCSSs with novel parameters and new construction methods.

	\section*{Acknowledgments}
	This work was supported by National Natural Science Foundation of China (Gran No. 62272420), Natural Science Foundation of Fujian Province (No. 2023J01535).
	
	\section*{Declarations}
	\textbf{Conflict of interest} The authors declare that they have no conflicts of interest to report regarding the present study.
	
	\bibliographystyle{sn-mathphys-num}

\begin{thebibliography}{99}
	\bibitem{C1}
	Golomb, S. W., Gong, G.: Signal Design for Good Correlation: for Wireless Communication, Cryptography, and Radar. Cambridge University Press, Cambridge (2005)
	
	\bibitem{C2}
	Fan, P. and Darnell, M.: Sequence design for communications applications. Research Studies Press, (1996)
	
	\bibitem{C3}
	Welch, L.: Lower bounds on the maximum cross correlation of signals (Corresp.). IEEE Transactions on Information Theory. {\bf 20}(3), 397--399 (1974)
	
	\bibitem{C4}
	Golay, M.: Complementary series. IRE Transactions on Information Theory. {\bf 7}(2), 82--87 (1961)
	
	\bibitem{C5}
	Tseng, C.C., Liu, C.: Complementary sets of sequences. IEEE Transactions on Information Theory. {\bf 18}(5), 644--652 (1972)
	
	\bibitem{C6}
	Bomer, L., Antweiler, M.:  Periodic complementary binary sequences. IEEE Transactions on Information Theory. {\bf 36}(6), 1487--1494 (1990)
	
	\bibitem{C7}
	Luke, H.D.: Binary odd-periodic complementary sequences. IEEE Transactions on Information Theory. {\bf 43}(1), 365--367 (1997)
	
	\bibitem{C8}
	Suehiro, N., Hatori, M.: N-shift cross-orthogonal sequences. IEEE Transactions on Information Theory. {\bf 34}(1), 143--146 (1988)
	
	\bibitem{C9}
	Chen, H.H., Yeh, J.F., Suehiro, N.: A multicarrier CDMA architecture based on orthogonal complementary codes for new generations of wideband wireless communications. IEEE Communications Magazine. {\bf 39}(10), 126--135 (2001)
	
	\bibitem{C10}
	Sojevic, P., Georghiades, C.N.: Complementary sequences for ISI channel estimation. IEEE Transactions on Information Theory. {\bf 47}(3), 1145--1152 (2001)
	
	\bibitem{C11}
	Pezeshki, A., Calderbank, A. R., Moran, W., Howard, S. D.: Doppler Resilient Golay Complementary Waveforms. IEEE Transactions on Information Theory. {\bf 54}(9), 4254--4266 (2008)  
	
	\bibitem{C12}
	Davis, J.A., Jedwab, J.: Peak-to-mean power control in OFDM, Golay complementary sequences, and Reed-Muller codes. IEEE Transactions on Information Theory. {\bf 45}(7), 2397--2417 (1999)
	
	\bibitem{C13}
	Fan, P., Yuan, W., Tu, Y.: Z-complementary Binary Sequences. IEEE Signal Processing Letters. {\bf 14}(8), 509--512 (2007)
	
	\bibitem{C14}
	Li, J., Huang, A., Mohsen, G., Chen, H.: Inter-Group Complementary Codes for Interference-Resistant CDMA Wireless Communications. IEEE Transactions on Wireless Communications. {\bf 7}(1), 166--174 (2008)
	
	\bibitem{C15}
	Liu, Z., Guan, Y., Ng, B., Chen, H.H.: Correlation and Set Size Bounds of Complementary Sequences with Low Correlation Zone. IEEE Transactions on Communications. {\bf 59}(12), 3285--3289 (2011)
	
	\bibitem{C16}
	Liu, Z., Parampalli, U., Guan, Y., Boztas, S.: Constructions of Optimal and Near-Optimal Quasi-Complementary Sequence Sets from Singer Difference Sets. IEEE Wireless Communications Letters. {\bf 2}(5), 487--490 (2013)
	
	\bibitem{C17}
	Liu, Z., Guan, Y., Mow, W. H.: A Tighter Correlation Lower Bound for Quasi-Complementary Sequence Sets. IEEE Transactions on Information Theory. {\bf 60}(1), 388--396 (2014)		
	
	
	
	\bibitem{C18}
	Ding, C., Feng, K., Feng, R.: Unit time-phase signal sets: Bounds and constructions. Cryptography and Communications. {\bf 5}(3), 209--227 (2013)
	
	\bibitem{C19}
	Ye, Z., Zhou, Z., Fan, P., Liu, Z., Lei, X., Tang, X.: Low Ambiguity Zone: Theoretical Bounds and Doppler-Resilient Sequence Design in Integrated Sensing and Communication Systems. IEEE Journal on Selected Areas in Communications. {\bf 40}(6), 1809--1822 (2022)
	
	\bibitem{C20}
	Tian, L., Song, X., Liu, Z., Li, Y.: Asymptotically Optimal Sequence Sets With Low/Zero Ambiguity Zone Properties. IEEE Transactions on Information Theory. {\bf 71}(6), 4785--4796 (2025)
	
	\bibitem{C21}
	Wang, Z., Shen, B., Yang, Y., Zhou, Z.: New Construction of Asymptotically Optimal Low Ambiguity Zone Sequence Sets. IEEE Signal Processing Letters. {\bf 32}, 2609--2613 (2025)
	
	\bibitem{C22}
	Wang, Z., Zhou, Z., Adhikary, A. R., Yang, Y., Mesnager, S., Fan, P.: Asymptotically Optimal Aperiodic and Periodic Sequence Sets with Low Ambiguity Zone Through Locally Perfect Nonlinear Functions. IEEE Transactions on Information Theory. (2026) DOI: 10.1109/TIT.2026.3671733
	
	\bibitem{C23}
	Yang, Z., Wang, Z., Liu, H., Feng, K.: New Constructions of Locally Perfect Nonlinear Functions and Their Application to Sequence Sets With Low Ambiguity Zone. IEEE Transactions on Information Theory. (2026) DOI: 10.1109/TIT.2026.3669567
	
	\bibitem{C24}
	Peng, X., Cheng, J., Wu, C., Li, C., Liu, Z.: New Constructions of Asymptotically Optimal Zero/Low Ambiguity Zone Sequence Sets. IEEE Signal Processing Letters. {\bf 33}, 808--812 (2026)
	
	\bibitem{C25}
	Shen, B., Yang, Y., Zhou, Z., Liu, Z., Fan, P.: Doppler Resilient Complementary Sequences: Theoretical Bounds and Optimal Constructions. IEEE Transactions on Information Theory. {\bf 71}(7), 5166--5177 (2025)
	
	\bibitem{C26}
	Shen, B., Zhou, Z., Yang, Y., Fan, P.: Some Optimal and Near Optimal Doppler Resilient Complementary Sequence Sets. arXiv preprint {\bf arXiv:2508.15325} (2025)
		
	\bibitem{D1}
	Wang, Z., Yang, Y., Zhou, Z., Adhikary, A. R., Fan, P.: Doppler Resilient Complementary Sequences: Tighter Aperiodic Ambiguity Function Bound and Optimal Constructions. arXiv preprint. {\bf arXiv:2505.11012} (2025)
		
		\bibitem{D2}
		Wang, Z., Yang, Z., Yang, Y., Adhikary, A. R., Feng, K.: Asymptotically Optimal Aperiodic Doppler Resilient Complementary Sequence Sets Via Generalized Quasi-Florentine Rectangles. arXiv preprint. {\bf arXiv:2602.06045} (2025)
		
	
	\bibitem{a1}
	Li, Y., Tian, L., Xu, C.: Constructions of Asymptotically Optimal Aperiodic Quasi-Complementary Sequence Sets. IEEE Transactions on Communications. {\bf 67}(11), 7499--7511 (2019)
	
	\bibitem{a2}
	Xiao, H., Luo, G., Cao, X.: New Constructions of Asymptotically Optimal Quasi-Complementary Sequence Sets With Small Alphabet Sizes. IEEE Transactions on Communications. {\bf 73}(8), 5881--5890 (2025)
	
	\bibitem{a3}
	Wang, P., Heng, Z., Li, C.: New Constructions of Asymptotically Optimal Periodic and Aperiodic Quasi-Complementary Sequence Sets. IEEE Transactions on Communications. {\bf 73}(12), 14167--14182 (2025)
	
		\bibitem{R20}
		Liu, Z., Guan, Y., Parampalli, U.: New Complete Complementary Codes for Peak-to-Mean Power Control in Multi-Carrier CDMA. IEEE Transactions on Communications. {\bf 62}(3), 1105--1113 (2014)
	
	\bibitem{R21}
	Simon, M. K., Omura, J. K., Scholtz, R. A., Levitt, B. K.: Spread Spectrum Communications Handbook. Rockville, MD, USA: Computer Science (1985)
	
	
		\bibitem{q1}
		Yu, N. Y.: Non-Orthogonal Golay-Based Spreading Sequences for Uplink Grant-Free Access. IEEE Communications Letters. {\bf 24}(10), 2104--2108 (2020)
		
		\bibitem{q2}
		Yu, N. Y.: Binary Golay Spreading Sequences and Reed-Muller Codes for Uplink Grant-Free NOMA. IEEE Transactions on Communications. {\bf 69}(1), 276--290 (2021)
		
		\bibitem{q3}
		Tian, L., Liu, T., Li, Y.: New Constructions of Binary Golay Spreading Sequences for Uplink Grant-Free NOMA. IEEE Communications Letters. {\bf 26}(10), 2480--2484 (2022)
		
		\bibitem{q4}
		Liu, K., Zhou, Z., Adhikary, A. R., Tang, C.: Large Sets of Binary Spreading Sequences With Low Correlation and Low PAPR via Gold Functions. IEEE Transactions on Information Theory. {\bf 70}(7), 5309--5322 (2024)
		
		\bibitem{q5}
		Liu, K., Zhou, Z., Adhikary, A. R., Luo, R: New sets of non-orthogonal spreading sequences with low correlation and low PAPR using extended Boolean functions. Designs, Codes and Cryptography. {\bf 91}(10), 3115--3139 (2023)
		
		\bibitem{q6}
		Xiang, C., Tang, C., Qiu, W.: Three new classes of spreading sequence sets with low correlation and PAPR. Finite Fields and Their Applications. {\bf 103}, 102575 (2025)

		
		
		
		
		
	\end{thebibliography}

\end{document}